\newcommand{\crefdefpart}[2]{%
  \hyperref[#2]{\namecref{#1}~\labelcref*{#1}~\!(\ref*{#2})}%
}
\newtheorem{claim}{Claim}
\newcommand{\db}{\mathbf{db}}
\newcommand{\hyperGraphRepr}[1]{\mathcal H_{#1}}
\newcommand{\absproblemdef}[3]{%
\begin{center}
\begin{tabular}{lp{9.5cm}}\toprule
\textsf{\bfseries Problem:}& #1 \\\midrule
\textsf{\bfseries Input:}& #2.\\
\textsf{\bfseries Question:}& #3?\\\bottomrule
\end{tabular}
\end{center}
}
\renewcommand\footnotetextcopyrightpermission[1]{} %
\begin{document}

\title{Parameterised Complexity of Consistent Query Answering via Graph Representations}

\author{Teemu Hankala}
\email{teemu.hankala@helsinki.fi}
\affiliation{%
\institution{University of Helsinki}
  \city{Helsinki}
  \country{Finland}
}

\author{Miika Hannula}
\affiliation{%
  \institution{University of Tartu}
  \city{Tartu}
  \country{Estonia}}
\email{hannula@ut.ee}

\author{Yasir Mahmood}
\affiliation{%
  \institution{Universität Paderborn}
  \city{Paderborn}
  \country{Germany}
}

\author{Arne Meier}
\affiliation{%
 \institution{Leibniz Universität Hannover}
 \city{Hannover}
 \country{Germany}}

\renewcommand{\shortauthors}{Hankala et al.}

\begin{abstract}
We study consistent query answering via different graph representations. First, we introduce solution-conflict hypergraphs in which nodes represent facts and edges represent either conflicts or query solutions. Considering a monotonic query and a set of antimonotonic constraints, we present an explicit algorithm for counting the number of repairs satisfying the query based on a tree decomposition of the solution-conflict hypergraph. The algorithm not only provides fixed-parameter tractability results for data complexity over expressive query and constraint classes, but also introduces a novel and potentially implementable approach to repair counting.
Second, we consider the Gaifman graphs arising from MSO descriptions of consistent query answering. Using a generalization of Courcelle's theorem, we then present fixed-parameter tractability results for combined complexity over expressive query and constraint classes.
\end{abstract}

\begin{CCSXML}
<ccs2012>
<concept>
<concept_id>10003752.10010070.10010111.10011736</concept_id>
<concept_desc>Theory of computation~Incomplete, inconsistent, and uncertain databases</concept_desc>
<concept_significance>500</concept_significance>
</concept>
<concept>
<concept_id>10003752.10010070.10010111.10011734</concept_id>
<concept_desc>Theory of computation~Logic and databases</concept_desc>
<concept_significance>300</concept_significance>
</concept>
<concept>
<concept_id>10003752.10003809.10010052</concept_id>
<concept_desc>Theory of computation~Parameterized complexity and exact algorithms</concept_desc>
<concept_significance>300</concept_significance>
</concept>
</ccs2012>
\end{CCSXML}

\ccsdesc[500]{Theory of computation~Incomplete, inconsistent, and uncertain databases}
\ccsdesc[300]{Theory of computation~Logic and databases}
\ccsdesc[300]{Theory of computation~Parameterized complexity and exact algorithms}

\keywords{Parameterized Complexity, Graph Representations, FPT,
Courcelle's Theorem, Consistent Query Answering, Treewidth, MSO}

\maketitle
\section{Introduction}

Database inconsistency can arise in a number of different situations. For instance, manual errors at data entry stage, updates on duplicated data, hardware errors, software bugs, concurrency issues, and schema changes can all lead to inconsistent information being stored. Additionally, integrating data from multiple sources can introduce inconsistencies, even if each source database is individually consistent.
To manage inconsistent data in a principled way, Arenas et al. \cite{ArenasBC99} introduced the notions of a repair and consistent query answering. 
In informal terms, a \emph{repair} of a potentially inconsistent database $\db$ is a consistent database that is ``minimally different'' from $\db$. Typically the distance between two databases is measured using symmetric difference, but also other notions are possible.

A central computational problem related to database repairs is that of \emph{consistent query answering} (CQA). The input to this problem comprises a Boolean query $q$, a constraint set $\Sigma$, and a database $\db$, and the question is whether or not $q$ is true on all repairs of $\db$ over $\Sigma$. %
Over the past decade, significant research has been dedicated to charting the landscape of the computational complexity of CQA. A key factor in achieving success has been the development of graph representations to capture the interaction between the query and the constraints. In particular, the concept of an \emph{attack graph}, introduced by Wijsen~\cite{Wijsen12}, eventually led to a complete classification of the data complexity of CQA with respect to self-join-free conjunctive queries and primary-key constraints~\cite{KoutrisW17,KoutrisW21}. Later, attack graphs have been used together with \emph{dependency graphs} \cite{FaginKMP05} when addressing an extension to unary foreign keys \cite{HannulaW22}.

Another important facet of consistent query answering is its counting version. This problem is to compute the number of repairs of $\db$ satisfying a Boolean query $q$. 
Maslowski and Wijsen~\cite{MaslowskiW13} established a dichotomy between the membership in $\FP$ and $\sharpP$-completeness under Turing reductions for self-join-free conjunctive queries and primary keys.
Later this result was  %
extended to functional dependencies ({\FD}s) \cite{CalauttiLPS22a}.
To address the intractable side of repair counting, 
one line of research has been to study randomised approximation schemes under different repair semantics~\cite{CalauttiCP21,CalauttiLPS24a,CalauttiLPS22}. 
For instance, Calautti~et~al.~\cite{CalauttiLPS24a} showed that computing the percentage of operational repairs satisfying a query admits an FPRAS in combined complexity when attention is restricted to self-join-free conjunctive queries of bounded generalised hypertreewidth.

Another line of research on database repairs is to investigate the graph structure of constraint violation.
The \emph{conflict hypergraphs}, studied in several works~\cite{ChomickiMS04,CHOMICKI200590,ArenasBCHRS03,StaworkoCM12}, provide a robust abstraction for anti-monotonic constraint sets. 
In a conflict hypergraph, vertices correspond to the facts of a database, while hyperedges represent the subset-minimal collections of facts violating a constraint. 
When the constraints are {\FD}s, each hyperedge reduces to size two, resulting in a \emph{conflict graph}. 
Previously, Chomicki et al.  \cite{ChomickiMS04} presented an approach to CQA  using conflict hypergraphs. 
Their approach however is restricted to quantifier-free grounded queries for which the data complexity of CQA is already in $\Ptime$ under denial constraints.
Recently, for instance, Kimelfeld~et~al.~\cite{KimelfeldLP20} considered the enumeration of preferred repairs at the general level of
conflict graphs. Furthermore, Livshits~et~al.~\cite{LivshitsKW21} has shown that computing the number of repairs over an {\FD} set $\Sigma$ %
is in $\FP$ exactly when the {conflict graph} of each database over $\Sigma$ is $P_4$-free.

In parameterised complexity~\cite{DBLP:series/txcs/DowneyF13}, a problem's complexity is examined by additional structural components, known as parameters. 
This perspective arises from the observation that many real-world problem instances have structural features that can remain manageable or constant in size as inputs grow.
A key aim in parameterised complexity is to identify parameters that reflect practical relevance and grow slowly or remain fixed in size. 
A problem is termed \emph{fixed-parameter tractable} (FPT) if there exist an algorithm and a computable function $f$ such that for each instance $(x,k)$ the problem is solved in time $f(k)\cdot|x|^{O(1)}$. 
Problems that are in FPT are seen to be efficiently solvable in the parameterised sense.
In this paper, we explore CQA from the vantage point of graph representations and parameterised complexity, an approach that, to the best of the authors' knowledge, has been rarely considered in this context.
Lopatenko and Bertossi~\cite{LopatenkoB07} present fixed-parameter tractability results, though their focus is somewhat specific, concentrating on the incremental complexity of consistent query answering with cardinality repairs.

A very prominent parameter in the context of graphs is the treewidth~\cite{DBLP:journals/jct/RobertsonS84,bb72,halin76}. 
Intuitively, the treewidth of a graph measures how tree-like the graph is. 
Several classical NP-complete decision problems become FPT (e.g., Independent Set, Vertex Cover) when parameterised by the treewidth of the input graph. 
For non-graph problems, one has to define specific problem related graph representations of the instances reaching FPT once their treewidth is bounded. 
Regarding propositional satisfiability, it has been shown that several different such graph representations of formulas (e.g., primal graph~\cite{DBLP:conf/vldb/Yannakakis81,DBLP:journals/ai/GottlobSS02}, incidence graph and dual graph~\cite{DBLP:journals/dam/CourcelleMR01}) yield FPT algorithms. 
Concerning graph representations, Courcelle's Theorem~\cite{DBLP:journals/eatcs/CourcelleE12,DBLP:journals/iandc/Courcelle90,DBLP:journals/algorithmica/BoriePT92} is one of the most commonly known tools, stating that every problem expressible in monadic second-order logic (MSO) can be solved in linear time on structures whose graph representations are of bounded treewidth.

\paragraph*{Our Contributions}
We investigate CQA using graph representations that capture the combined interaction of $\Sigma$, $q$, and $\db$, offering a unified perspective to the three-partite structure of the problem. 
More specifically, we identify two primary graph approaches to database repairs and consistent query answering: one considers
the minimal solution to the query $q$ 
and another that focuses on partial solutions for pairs of query atoms.
As a result, both graphs empasise on the interplay between 
$q$ and the database $\db$ while also modelling the conflicts due to constraints in $\Sigma$. 
    The first representation extends the conflict hypergraph with hyperedges corresponding to minimal query solutions, and the second one is motivated by Courcelle's theorem.
    For the first graph representation (called the \emph{solution-conflict hypergraph}), we present a direct algorithm that leads to fixed-parameter tractability results for the data complexity of the counting version of CQA. 
    However, it does not serve as a solution to combined complexity due to an exponential blow-up in the size of the graph. 
    In contrast, for the second graph representation, we utilise Courcelle's theorem obtaining combined complexity results.
    We emphasise that the two graph representations and their resulting treewidth measures are incomparable to each other (Theorem~\ref{thm:tw-different}). 
    Moreover, the obtained results and techniques extend to expressive monotonic queries (such as unions of conjunctive queries with inequality) and anti-monotonic constraints (such as denial constraints).
Table~\ref{tab:results} depicts an overview of our results.

\begin{table}[]
\centering
\resizebox{\textwidth}{!}{
    \begin{tabular}{cccccc}
    \toprule 
    $\rightarrow$ ICs  & \multicolumn{2}{c}{data complexity} & \multicolumn{2}{c}{combined complexity} \\
    $\downarrow$ Parameters  & PKs / FDs & DCs & PKs / FDs & DCs \\
    \midrule 
    $\tw_\calC/\blocksize$ 
    & $\para\co\NP$-complete$^{{\text{ C}\ref{cor:tw-conf}}}$ 
    & $\para\co\NP$-hard$^{{\text{ C}\ref{cor:tw-conf}}}$ 
    & $\para\co\NP$-hard$^{{\text{ C}\ref{cor:tw-conf}}}$ 
    & $\para\co\NP$-hard$^{{\text{ C}\ref{cor:tw-conf}}}$
     \\
    $\atomsq$ 
    & $\para\co\NP$-complete$^{\text{~\cite{FuxmanM07}}}$
    & $\para\co\NP$-hard$^{\text{~\cite{FuxmanM07}}}$
    & $\para\co\NP$-hard$^{\text{~\cite{FuxmanM07}}}$ 
    & $\para\PiP$-complete$^{\text{~\cite{decidability03}}}$\\
    $\twH$/$\twG$ &
    $\FPT^{\footnotesize{\text{ C}\ref{cor:FPT-solution-conflict}/\text{ C}\ref{cor:gaifman-dc-fds}}}$ 
    & $\FPT^{\footnotesize{\text{ C}\ref{cor:FPT-solution-conflict}/\text{ T}\ref{thm:gaifman-dcs}}}$ 
     & -- 
     & -- \\
    $\twG+\atomsq$
    & $\FPT^{\footnotesize{\text{ T}\ref{cor:gaifman-dc-fds}}}$ 
    & $\FPT^{\footnotesize{\text{ T}\ref{thm:gaifman-dcs}}}$
    & $\FPT^{\footnotesize{\text{ T}\ref{thm:gaifman-cc-fds}}}$ 
    & $\FPT^{\footnotesize{\text{ T}\ref{thm:gaifman-dcs}}}$ \\
    \bottomrule
    \end{tabular}
  }
    \caption{Parameterised complexity results for $\cqa$ w.r.t.\ $\bucq^{\neq}$-queries. The lower bounds apply already to $\bcq$-queries, and
    the upper bounds in the top two rows are straightforward.
    The lower bounds regarding $\atomsq$ follow trivially from the existing ones, those are listed here only for completeness. The membership results for DCs assume the sets of constraints to be fixed. 
    Each result is followed by a pointer to the result containing the proof. 
    }
    \label{tab:results}
\end{table}

\section{Preliminaries}
Given a positive integer $n$, we let $[n]\coloneqq\{1,\dots,n\}$. 
We assume disjoint countably infinite sets of \emph{variables}, \emph{constants}, and \emph{relation names}.
A \emph{term} is either a variable or a constant. %
A mapping from a set of variables to a set of constants is called an \emph{assignment}. An assignment is extended to be the identity on the constants.%

\subsection{Databases and queries}
A \emph{(database) schema} is a finite set $\sch=\{R_1, \dots ,R_n\}$ of {relation names}. Each relation name $R$ is associated with an \emph{arity} $\ar(R)\geq 1$,
and we sometimes write $R[k]$ instead of $R$ to emphasise that $R$ is a relation name with arity $k$. %
For terms %
$t_{1},\dots,t_{k}$, the expression $R(t_1, \dots ,t_k)$ is called a \emph{(relational) atom}.
A relational atom composed only of constants is called a \emph{fact}. An expression of the form $t=t'$, where $t$ and $t'$ are terms, is called an \emph{(equality) atom}.

Fix a database schema $\sch$.
A \emph{database over $\sch$} is a finite set $\db$ of $R$-facts for $R \in \sch$. The \emph{active domain of $\db$}, denoted $\adom{\db}$, is the set of constants appearing in $\db$.
A \emph{first-order logic formula $\phi$ over $\sch$}
is an expression built from relational atoms $R(\vec{t})$, $R\in \sch$, and equality atoms %
 using logical connectives $\neg$, $\lor$, $\land$ and quantifiers $\exists$, $\forall$.
The \emph{set of free variables of $\phi$}, denoted $\FV(\phi)$, 
is defined in the usual manner, and $\phi$ is called a \emph{sentence} if %
$\FV(\phi)=\emptyset$.   The set of constants appearing in $\phi$ is denoted $\adom{\phi}$.
Given an assignment $s\colon\FV(\phi) \to \adom{\db}\cup\adom{\phi}$, we say that  \emph{$\db$ satisfies $\phi$ under $s$}, written $(\db,s)\models \phi$, if the first-order structure $\db$ with domain $\adom{\db}\cup\adom{\phi}$ and  assignment  $s$ satisfies $\phi$ according to the usual Tarskian semantics. 
For a sentence $\phi$, we say that \emph{$\db$ satisfies $\phi$} and write $\db \models \phi$ if $\db$ satisfies $\phi$ under the empty assignment.

A \emph{Boolean query $q$ over $\sch$} can be considered to be a set of databases over $\sch$.  
If $\db\in q$, we say that it \emph{satisfies} $q$, writing $\db \models q$.
The query $q$ is \emph{monotonic} (resp. \emph{anti-monotonic}) if $\db \subseteq \db'$ (resp. $\db \supseteq \db'$) and $\db \models q$ imply $\db \models q'$.
A first-order logic sentence is interpreted as a Boolean query via the set of databases that satisfy it.
A \emph{Boolean conjunctive query} %
($\bcq$)
is a first-order sentence $\phi$ of the form
\begin{equation}\label{eq:bcq}
\exists \vec{y} (R_1(\vec{t_1}) \land \dots \land R_m(\vec{t}_m)), %
\end{equation}
where for $i\in [m]$, 
$R_i(\vec{t}_i)$ is a relational atom over a sequence $\vec{t}_i$ of %
terms. A \emph{self-join-free Boolean conjunctive query} ($\sjfbcq$) is a $\bcq$ in which no two distinct atoms share a relation name. %
A \emph{Boolean conjunctive query with inequality} ($\bcq^{\neq}$) is obtained from \cref{eq:bcq} by extending its quantifier-free part with negated equality atoms $\neg t= u$ (in short, \emph{inequality atoms} $t \neq u$), where $t$ is a variable occurring in some relational atom and $u$ is a constant or a variable occurring in some relational atom.
A \emph{Boolean union of conjunctive queries with inequality} ($\bucq^{\neq}$) is a first-order sentence $\phi$ of the form
\(
\psi_1 \lor \dots \lor \psi_\ell
\),
 where $\psi_i$ is a  $\bcq^{\neq}$, for $i\in [\ell]$.

A \emph{constraint} over $\sch$ is simply a Boolean query over $\sch$.
A \emph{denial constraint} (DC) is a first-order logic sentence of the form
\(
\forall \vec{x} \neg(\theta_1 \land \dots \land \theta_n), 
\)
where $\theta_i$ is either a relational atom, an equality atom, or a negated equality atom, for $i\in [n]$. A \emph{functional dependency} (FD), is an expression of the form $R\colon U\to V$, where $U,V \subseteq [\ar(R)]$. It represents the
first-order logic sentence 
\[
\forall \vec{x}\vec{y} \left(\left(R(\vec{x}) \land R(\vec{y}) \land \bigwedge_{i\in U}x_i =y_i\right)\to \bigwedge_{i\in V}x_i =y_i\right).
\]
Each {\FD} can also be expressed as a DC. 
A \emph{key} is an expression of the form $R:U$, for $U \subseteq [\ar(R)]$, representing the {\FD} $R\colon U\to [\ar(R)]$. If a relation $R$ is associated with a unique key $R:U$, we call this key the \emph{primary key} (PK) of $R$. In particular, a set of primary keys does not contain two different keys over a shared relation name.

Let $\Sigma$ be a fixed set of constraints. 
We say that $\db$ \emph{satisfies} $\Sigma$, written $\db \models \Sigma$, if $\db$ satisfies each constraint in $\Sigma$.
A {database} $\db$ is called \emph{consistent (over $\Sigma$)} %
if it satisfies $\Sigma$, and otherwise \emph{inconsistent}. %
Let us denote by $\oplus$ the symmetric difference operation, and
let $\db$ be a (potentially inconsistent) database. 
A \emph{repair of $\db$ (over $\Sigma$)} is any consistent database $\rep$ for which any other database $\sep$ such that $\sep \oplus \db \subsetneq\rep\oplus \db$ is inconsistent.
From this it follows that if the constraints of $\Sigma$ are anti-monotonic,
then the repair $\rep$ must be a subset of $\db$. Such repairs are often referred to as \emph{subset-repairs} in the literature.

We consider the following decision problem.
\absproblemdef{$\cqa$}{%
A database $\db$, set of constraints $\Sigma$, and a Boolean query $q$}{Is $q$ true on every repair of $\db$ over $\Sigma$}
$\ncqa$ is defined analogously and counts the number of repairs satisfying $q$.
If $\mathcal{C}$ is a class of constraints, and $\mathcal{Q}$ a class of Boolean queries,
we denote by $\cqa(\mathcal{C},\mathcal{Q})$ the restriction of $\cqa$ to input instances $(\db,\Sigma,q)$ where $q$ belongs to $\mathcal{Q}$ and each constraint of $\Sigma$ belongs to $\mathcal{C}$.
We write $\cqa(\Sigma,q)$ for the restriction of $\cqa$ to input instances $(\db,\Sigma,q)$; i.e., only the database $\db$ is given, whereas $\Sigma$ and $q$ are fixed. Note that this problem can be understood as the \emph{data complexity} of CQA, whereas $\cqa(\mathcal{C},\mathcal{Q})$ is understood as the \emph{combined complexity} of CQA (with respect to $\mathcal{C}$ and $\mathcal{Q}$).
The restrictions for $\ncqa$ are defined analogously.

\subsection{Parameterised Complexity} 
We give a brief exposition of parameterised complexity theory and refer to the textbook of Downey and Fellows~\cite{DBLP:series/txcs/DowneyF13} for details.
A \emph{parameterised problem (PP)} $\Pi$ is a subset of $\Sigma^*\times\mathbb N$, where $\Sigma$ is an alphabet.
For an instance $(x,k)\in\Sigma^*\times\mathbb N$, $k$ is called the \emph{parameter value}.
A PP $\Pi$ is \emph{fixed-parameter tractable} (short: $\FPT$) if there exists a deterministic algorithm deciding $\Pi$ in time $f(k)\cdot|x|^{O(1)}$ for every input $(x,k)$, where $f$ is a computable function.
Similarly, a \emph{parameterised counting problem} $F \colon \Sigma^* \times \mathbb N \to \mathbb N$
is fixed-parameter tractable, if there is a deterministic algorithm that for
some computable function $f$ computes $F(x, k)$ in time
$f(k) \cdot |x|^{O(1)}$ for each $(x, k) \in \Sigma^* \times \mathbb N$
\cite{DBLP:journals/siamcomp/FlumG04}.

\begin{definition}\label{def:fptreduction}
	Let $\Sigma$ and $\Delta$ be two alphabets.
	 A PP $\Pi\subseteq\Sigma^*\times\mathbb{N}$ \emph{$\FPT$-reduces} to a PP $\Theta\subseteq\Delta^*\times\mathbb N$, in symbols $\Pi\fptreduction\Theta$, if %
	(i) there is an $\FPT$-computable function $f$, such that, for all $(x,k)\in\Sigma^*\times\mathbb N$: $(x,k)\in \Pi\Leftrightarrow f(x,k)\in \Theta$,
	(ii) there exists a computable function $g\colon\mathbb N\to\mathbb N$ such that for all $(x,k)\in\Sigma^*\times\mathbb N$ and $f(x,k)=(y,\ell)$: $\ell\leq g(k)$.
\end{definition}

The problems $\Pi$ and $\Theta$ are $\FPT$-\emph{equivalent} if both $\Pi\fptreduction\Theta$ and $\Theta\fptreduction\Pi$ are true.
We also use higher complexity classes via the concept of \emph{precomputation on the parameter}.
\begin{definition}
	Let $\mathcal C$ be any complexity class.
	Then $\para\mathcal C$ is the class of all PPs $\Pi\subseteq\Sigma^*\times\mathbb N$ such that there exist a computable function $\pi\colon\mathbb N\to\Delta^*$ and a language $L\in\mathcal C$ with $L\subseteq\Sigma^*\times\Delta^*$ such that for all $(x,k)\in\Sigma^*\times\mathbb N$ we have that $(x,k)\in \Pi \Leftrightarrow (x,\pi(k))\in L$.
\end{definition}
Observe that $\para\Ptime=\FPT$ is true.
For a constant $c\in\mathbb N$ and a PP $\Pi\subseteq\Sigma^*\times\mathbb N$, the \emph{$c$-slice of $\Pi$}, written as $\Pi_c$, is defined as $\Pi_c\coloneqq\{\,(x,k)\in\Sigma^*\times\mathbb N\mid k=c\,\}$.
In our setting, showing $\Pi\in\para\mathcal C$, it suffices to show $\Pi_c\in\mathcal C$ for every $c\in\mathbb{N}$.
Moreover, in order to prove that a PP $\Pi$ is $\para\mathcal C$-hard for some complexity class $\mathcal C$, it is enough to prove that $\Pi_c$ is $\mathcal C$-hard for some $c\in \mathbb N$.

\subsection{Hypergraphs and treewidth}
A \emph{hypergraph} is a pair $\hgsym=(V,E)$ consisting of a set $V$ of nodes and a set $E$ of subsets of $V$, called \emph{hyperedges}. A hyperedge of size $1$ or $2$ is called an \emph{edge}.
A simple \emph{graph} (without loops) can thus be viewed as a particular kind of hypergraph, in which each hyperedge is an edge (of size $2$).
\begin{definition}[Treewidth]\label{def-tw}
A \emph{tree decomposition} of a hypergraph $\hgsym=(V,E)$ is a tree $T=(B,E_T)$, where the vertex set $B\subseteq\mathcal P(V)$ is a collection of \emph{bags} and $E_T$ is a set of edges as follows:
\begin{enumerate}
	\item $\bigcup_{b\in B}b=V$,
	\item\label{it:subsumed} for every $e\in E$ there is a bag $b\in B$ with $e\subseteq b$, and 
	\item for all $v\in V$ the subtree of $T$ induced by the bags containing $v$ is connected.
\end{enumerate} 
The \emph{width} of the tree decomposition $T$
is the size of the largest bag decreased by one: $\max_{b\in B}|b|-1$.
The \emph{treewidth} of $\hgsym$, denoted $\tw(\calH)$, is the minimum width over all tree decompositions of $\hgsym$.
\end{definition}
The \emph{primal graph} of a hypergraph $\hgsym$ is the (simple) graph $G$ that has the same node set as $\hgsym$, and an edge between each pair of nodes that belong to a common hyperedge. 
The treewidth of $\mathcal{H}$ equals the treewidth of its primal graph.

The conflict hypergraphs, mentioned already in the introduction, are defined as follows.
\begin{definition}[Conflict hypergraph]\label{def:conf-graph}
    Let $\db$ be a database and $\Sigma$ a set of anti-monotonic constraints.
    The \emph{conflict hypergraph} $\hyperGraphRepr{\db,\Sigma}$ of $(\db,\Sigma)$
    is defined as the hypergraph $\hgsym=(V,E)$, where 
    \begin{enumerate}
        \item the set of nodes $V$ is $\db$,
        \item\label{it:conflict} a subset $e\subseteq \db$ that is minimal w.r.t.\ falsifying the constraints forms a hyperedge in $E$, i.e.,
        \begin{enumerate}
            \item $e \not\models \Sigma$ and 
            \item for each $e' \subsetneq e$ we have that $e'\models \Sigma$.
        \end{enumerate}
    \end{enumerate}
    We refer to a \emph{conflict graph} when the conflict hypergraph is a graph.
    In general, constructing the conflict hypergraph has exponential time complexity in $\size{\db}+\size{\Sigma}$.
\end{definition}

\section{Warm-up: Conflict Graphs}
We first establish two simple lower bounds for conjunctive queries and primary keys.

Fix a set of primary keys $\Sigma$ over a database $\db$. We say that two facts $R(\vec{a}),R(\vec{b})\in \db$ are \emph{key-equal} if they agree on their primary-key positions; that is, if $R:U$ is the primary key of $R$ in $\Sigma$, then $a_i=b_i$ for $i\in U$.
A \emph{block} is any subset-maximal set of key-equal facts in $\db$.
\begin{proposition}[{\cite[Lemma~5]{FuxmanM07}}]\label{prop:coNP}
    Consider a set of PKs $\Sigma= \{R_1\colon \{1\} , R_2 \colon \{1\} \}$ and a $\sjfbcq$ $q=\exists x y z  (R_1(x,y)\land R_2(z,y))$.
    Then $\cqa{(\Sigma, q)}$ is $\co\NP$-hard.
\end{proposition}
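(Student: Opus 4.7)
The plan is to establish $\co\NP$-hardness by a polynomial-time many-one reduction from $3$-\textsc{Unsat} to $\cqa(\Sigma, q)$, equivalently from $3$-\textsc{Sat} to the complement problem of deciding whether some repair of the input database falsifies $q$. Given a propositional formula $\phi = C_1 \wedge \cdots \wedge C_m$ over variables $x_1, \ldots, x_n$, I would construct a database $\db$ over the schema $\{R_1, R_2\}$ in which the repair choices encode both a truth assignment and a per-clause witness literal.

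For each variable $x_i$, I introduce fresh distinct constants $t_i$ and $f_i$ and add the two facts $R_1(i, t_i)$ and $R_1(i, f_i)$; these are key-equal under the primary key $R_1\colon\{1\}$, so every repair retains exactly one of them, and I interpret keeping $R_1(i, t_i)$ as $\alpha(x_i) = \text{true}$. For each clause $C_j = \ell_{j,1} \vee \ell_{j,2} \vee \ell_{j,3}$, I add three facts to $R_2$ all sharing the key value $j$: for a positive literal $\ell_{j,k} = x_i$ I add $R_2(j, f_i)$, and for a negative literal $\ell_{j,k} = \neg x_i$ I add $R_2(j, t_i)$. Each repair therefore also selects one witness literal per clause.

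The key equivalence to verify is that $q = \exists x y z\,(R_1(x,y) \wedge R_2(z,y))$ fails on a repair $\rep$ iff, for every clause $C_j$, the selected witness literal is satisfied by $\alpha$. The underlying observation is that the $R_2$-fact for a positive literal $x_i$ of $C_j$, namely $R_2(j, f_i)$, shares its second coordinate with a surviving $R_1$-fact exactly when $R_1(i, f_i) \in \rep$, i.e.\ precisely when $\alpha(x_i) = \text{false}$ falsifies that literal; the case for negative literals is symmetric. Since the constants $t_i, f_i$ are pairwise distinct and disjoint from the key values $i, j$, no unintended collisions between an $R_2$-fact and $R_1$-facts for other variables can occur. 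Combining the directions, the existence of a repair falsifying $q$ is equivalent to the existence of an assignment $\alpha$ together with a compatible per-clause literal choice, which in turn is equivalent to $\phi$ being satisfiable.

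Consequently, $\phi$ is unsatisfiable iff $q$ holds on every repair of $\db$, giving the desired reduction in polynomial time, and hence $\cqa(\Sigma, q)$ is $\co\NP$-hard. I expect the main care-requiring step to be the bookkeeping of which fresh constants appear where: one must confirm that the $R_2$-fact placed for the literal $\ell_{j,k}$ on variable $x_i$ can only possibly match an $R_1$-fact from block $i$, so that the clause-by-clause analysis above is not spoiled by accidental joins across unrelated variables or clauses.
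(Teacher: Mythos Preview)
Your reduction is correct and follows essentially the same idea as the cited Fuxman--Miller argument: encode a truth assignment via the $R_1$-blocks and a per-clause witness literal via the $R_2$-blocks, so that the join on the second column detects a falsified witness. The only cosmetic difference is that the paper refers to a reduction from (the complement of) \emph{monotone} 3CNF satisfiability, whereas you reduce from general 3-\textsc{Sat}; either works, and in both cases the resulting blocks have size at most~$3$, which is precisely what the paper needs for the ensuing corollary on bounded block size and conflict-graph treewidth.
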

The above result is obtained via a reduction from (the complement of) satisfiability of monotone 3CNF formulae, resulting in a database  $\db$ having blocks of size $3$.
Consequently, the $\co\NP$-hardness holds already with respect to a fixed maximum block size.
Furthermore, beside the cliques formed by the blocks, the conflict graph of $\db$ %
has no other edges.
As a result, its treewidth %
also remains fixed.
This allows us to deduce the following completeness result.

\begin{corollary}\label{cor:tw-conf}	
    Let $\Sigma$ be a set of PK{s} and $q$ a $\sjfbcq$.
    Then, $\cqa(\Sigma,q)$ is  $\para\co\NP$-complete when parameterised by (1) the maximum block size in $\db$, or (2) the treewidth of the conflict graph of $(\db, \Sigma)$.
\end{corollary}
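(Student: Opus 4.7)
The plan is to establish membership and hardness separately, using Proposition~\ref{prop:coNP} for both parameterisations.

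For the upper bound, observe that with $\Sigma$ and $q$ fixed, $\cqa(\Sigma,q)$ lies in $\co\NP$: a nondeterministic refuter guesses a subset $\rep\subseteq\db$ in polynomial time, then verifies in polynomial time that (i) $\rep$ is consistent, (ii) $\rep$ is subset-maximal among consistent subsets (by checking that adding any missing fact violates $\Sigma$), and (iii) $\rep\not\models q$. Hence the unparameterised data-complexity problem is in $\co\NP$, which yields $\Pi\in\para\co\NP$ for \emph{any} parameterisation, and in particular for both (1) and (2). For the membership part it therefore suffices to invoke the trivial parameterisation; no interaction with the parameter is needed.

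For the lower bound, I would reuse the reduction underlying Proposition~\ref{prop:coNP} verbatim and track its structural side-effects on the constructed database $\db$. The key observation is that the reduction in~\cite{FuxmanM07} from monotone 3SAT to the complement of $\cqa(\Sigma,q)$ produces a $\db$ in which every block of $R_1$ and every block of $R_2$ has size at most $3$. This immediately gives the hardness for parameterisation (1): for the $3$-slice of $\cqa(\Sigma,q)$ parameterised by maximum block size, the reduction is a legal polynomial-time reduction from a $\co\NP$-hard problem, so the $3$-slice is $\co\NP$-hard, and by the slice-wise criterion recalled in the preliminaries this yields $\para\co\NP$-hardness.

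For parameterisation (2), I would then argue that the conflict graph $\gaifmanGraphRepr{\db,\Sigma}$ produced by the same reduction has treewidth bounded by a small constant. Since $\Sigma=\{R_1:\{1\},R_2:\{1\}\}$ consists of primary keys on distinct relation symbols, two facts can conflict only if they belong to the same relation and are key-equal; hence the conflict graph is the vertex-disjoint union of the cliques induced by the blocks. As each block has at most $3$ facts, every connected component is a clique of size at most $3$, so $\tw(\gaifmanGraphRepr{\db,\Sigma})\le 2$. Thus the $2$-slice of $\cqa(\Sigma,q)$ parameterised by conflict-graph treewidth inherits $\co\NP$-hardness from Proposition~\ref{prop:coNP}, and the slice-wise criterion again upgrades this to $\para\co\NP$-hardness.

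Combining the $\co\NP$ membership with the two hardness arguments gives $\para\co\NP$-completeness for both parameterisations. The only nontrivial step is the structural analysis of the conflict graph in the second part, but since the reduction is tailored to two unary-key relations over disjoint relation names, the disjoint-clique structure is immediate and no new combinatorial work is required.
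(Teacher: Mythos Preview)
Your proposal is correct and follows essentially the same approach as the paper: membership is obtained from the trivial $\co\NP$ upper bound for data complexity, and hardness for both parameterisations comes from observing that the reduction behind Proposition~\ref{prop:coNP} produces databases with blocks of size at most~$3$, so the conflict graph is a disjoint union of cliques of size~$\le 3$ and has treewidth~$\le 2$. You spell out the refuter and the treewidth bound more explicitly than the paper does, but the argument is the same.
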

Note that one can lift the upper bound to any query language whose data complexity is in $\P$.

\section{Solution-conflict graphs}\label{sec:solution-conflict}
Database repairs are known to correspond to maximal independent sets in conflict hypergraphs~\cite{CHOMICKI200590,StaworkoCM12}.
While the conflict hypergraph captures the violations of constraints, it does not include information about the query. To apply conflict hypergraphs in consistent query answering, they need to be enriched with information about query solutions. This brings us to the concept of a solution-conflict hypergraph.
\begin{definition}[Solution-conflict hypergraph]\label{def:sol-conf-graph}
    Let $\db$ be a database, $\Sigma$ a set of anti-monotonic constraints, and $q$ a monotonic Boolean query.
    The \emph{solution-conflict hypergraph} $\hyperGraphRepr{\calI}$ of the triple $\calI=(\db,\Sigma,q)$ extends the conflict hypergraph of $(\db,\Sigma)$
    by adding the following hyperedges:
     \setcounter{enumi}{2} %
    \begin{enumerate}
        \item[(3)]%
        a subset $e\subseteq \db$ that is minimal w.r.t.\ satisfying the query forms a hyperedge in $E$, i.e.,
        \begin{enumerate}
            \item $e \models q$ and 
            \item for each $e' \subsetneq e$ we have that $e'\not\models q$.
        \end{enumerate}
        
    \end{enumerate}
    We refer to (hyper)edges formed according to \crefdefpart{def:conf-graph}{it:conflict} as \emph{conflict-(hyper)edges} and to (hyper)edges formed according to (3) as \emph{solution-(hyper)edges}.
    We then write $E_{\rm s}$ for the set of solution-(hyper)edges and $E_{\rm c}$ for the set of conflict-(hyper)edges of $E$.
\end{definition}    

First, we observe that %
in some specific cases the maximum independent sets of the solution-conflict hypergraph and the database repairs falsifying the query have a straightforward connection. 
A set of nodes %
in a hypergraph is called \emph{independent} if it does not contain any hyperedge. 
A \emph{maximum independent set} is an independent set of maximal cardinality. 

\begin{theorem}\label{thm:graph1-arbitrary}
Let $\db$ be a database having $k$ blocks, $\Sigma$ be a set of PKs, and $q$ be a Boolean monotonic query.
Then, there exists a repair $\rep$ of $\db$ over $\Sigma$ such that $\rep \not\models q$
    if and only if the solution-conflict graph of $(\db,\Sigma,q)$ has an independent set of size $k$. %
\end{theorem}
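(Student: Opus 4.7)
The plan is to exploit the very rigid structure that primary keys impose on repairs: each repair must select exactly one fact from each block. This means that $|\rep|=k$ for every repair $\rep$, and simultaneously the conflict graph restricted to any block is a clique (any two key-equal facts form a minimal violation). So every independent set in the conflict graph picks at most one fact per block, and an independent set of size exactly $k$ must be a transversal of the block partition.

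For the forward direction, I would take a repair $\rep$ with $\rep \not\models q$ and show that $\rep$ itself is the desired independent set in $\hyperGraphRepr{\calI}$. Size $k$ is immediate from the one-fact-per-block observation. Absence of conflict-(hyper)edges is immediate from consistency: any conflict-hyperedge $e \subseteq \rep$ would give $e \not\models \Sigma$ and hence $\rep \not\models \Sigma$ by anti-monotonicity. For the solution-hyperedges, if some $e \subseteq \rep$ satisfied $e \models q$, then the monotonicity of $q$ would force $\rep \models q$, contradicting the choice of $\rep$.

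For the backward direction, starting from an independent set $I$ of size $k$, I would first argue that $I$ is a repair. Since $I$ contains no conflict-edge and any two key-equal facts form a size-$2$ conflict-edge, $I$ contains at most one fact per block; as $|I|=k$ matches the number of blocks, $I$ contains exactly one fact per block and is therefore a maximal $\Sigma$-consistent subset of $\db$, i.e., a repair. It remains to show $I \not\models q$. Suppose for contradiction that $I \models q$. Since $I$ is finite, by greedily removing facts while preserving satisfaction of $q$, I obtain a subset-minimal $e \subseteq I$ with $e \models q$ and $e' \not\models q$ for every $e' \subsetneq e$. By \crefdefpart{def:sol-conf-graph}{3}, this $e$ is a solution-hyperedge contained in $I$, contradicting independence.

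I do not anticipate any real obstacle: the argument is essentially a bookkeeping exercise once the block-transversal characterisation of PK-repairs is in hand. The only small point that deserves care is the greedy-minimisation step in the backward direction, which relies on finiteness of $\db$ together with monotonicity of $q$ to guarantee that a subset-minimal satisfying set exists inside $I$ and hence qualifies as a solution-hyperedge of $\hyperGraphRepr{\calI}$.
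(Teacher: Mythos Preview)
Your proposal is correct and follows essentially the same approach as the paper: both arguments hinge on the observation that under primary keys every repair is a transversal of the block partition of size exactly $k$, so repairs falsifying $q$ coincide with size-$k$ independent sets of $\hyperGraphRepr{\calI}$. The only cosmetic difference is that the paper handles one implication by contrapositive (assuming no size-$k$ independent set exists and showing every repair satisfies $q$), whereas you prove both directions directly and spell out the greedy-minimisation step that extracts a solution-hyperedge from a satisfying set.
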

\begin{proof}
Suppose the solution-conflict hypergraph of 
$(\db, \Sigma, q)$ has an independent set $\rep$ of size $k$. Then, $\rep$ contains exactly one fact from each block, identifying it as a repair for $\db$ over primary key constraints. 
Furthermore, $\rep$ neither contains any conflict-edge (hence $\rep\models \Sigma$) nor any solution-hyperedge (thus $\rep\not \models q$). %

Conversely, suppose no $k$-element subset of the solution-conflict hypergraph of $(\db, \Sigma, q)$ is an independent set, and let $\rep$ be an arbitrary repair of $\db$. Since $\rep$ selects exactly one fact from each block, its size is $k$. Since $\rep$ is a repair, it cannot contain any conflict-edge. 
Thus by the hypothesis $\rep$ contains a solution edge. 
As a result, by monotonicity of $q$, we obtain $\rep \models q$. %
\end{proof}
The previous theorem entails that CQA over primary keys and monotonic queries can be reduced to the problem of computing the size of the maximum independent set in the solution-conflict hypergraph. 
Namely, some repair of a database $\db$ falsifies a query $q$ if and only if the size of maximum independent set for the solution-conflict hypergraph equals the size of the maximum independent set for the conflict hypergraph, which in turn equals the number of blocks in $\db$.
This approach, however, no longer applies when dealing with constraint sets more expressive than primary keys (or when more than one key per relation). 
The next example show the case of two keys. %
Note that an independent set is called \emph{maximal} if none of its strict supersets are independent. 
\begin{example}\label{ex:running}
     Consider a database $\db = \{R(a,b), R(c,b), R(c,d), R(e,d), R(e,f)\}$ and a constraint set $\Sigma$ consisting of two keys $R:\{1\}$ and $R:\{2\}$.
Then, a repair does not have a fixed size by virtue of $\{R(a,b), R(c,d), R(e,f)\}$, $\{R(c,b),R(e,f)\}$, and $\{R(c,b), R(e,d)\}$ being all possible repairs. 
The notion of blocks does not seem applicable to measuring the maximum independent set size in the conflict graph.
A repair of $\db$ may falsify a query even when maximum independent set sizes differ between solution-conflict and conflict hypergraphs. 
For the grounded query $q=R(a,b)$, the sets $\{R(c,d), R(e,f)\}$, $\{R(c,b),R(e,f)\}$, and $\{R(c,b), R(e,d)\}$ are maximal for the solution-conflict hypergraph, implying that the maximum independent set has a smaller cardinality in this graph than in the conflict hypergraph. 
Yet it is not the case that all repairs satisfy $q$, as witnessed by the repairs $\{R(c,b),R(e,f)\}$ and $\{R(c,b), R(e,d)\}$.\hfill$\blacktriangleleft$
\end{example}
For the remainder of this section, we fix a monotonic query $q$ and a set of anti-monotonic constraints $\Sigma$. 
Our aim is to construct a direct algorithm for $\ncqa(\Sigma,q)$, leveraging a tree decomposition of the associated solution-conflict hypergraph. 
To achieve this, we propose an algorithm that is executed twice. 
The algorithm is related to the problem of counting the number of maximal independent sets (\#IS), which is $\sharpP$-complete already for bipartite graphs \cite{ProvanB83}.  
The first execution computes the number of maximal independent sets in the conflict hypergraph, effectively solving \#IS on hypergraphs. 
The second execution determines the number of maximal independent sets in the conflict hypergraph that falsify $q$. 
The final result for $\ncqa(\Sigma,q)$ is then obtained by subtracting the second count from the first. 

Let us first consider a na\"ive approach to compute the total number of repairs: multiply the number of repairs for each bag. The next example demonstrates why this approach fails.
\begin{example}
    Consider $\db$ and $\Sigma$ from \cref{ex:running}. The conflict graph is a path 
     \[
     R(a,b) \text{ --- } 
     R(c,b) \text{ --- } 
     R(c,d) \text{ --- }  
     R(e,d) \text{ --- } R(e,f).
     \]
     It has the following tree decomposition of treewidth $1$:
          \begin{equation}\label{eq:path}
           \{R(a,b), R(c,b)\} \text{ --- }  
           \{R(c,b), R(c,d)\} \text{ --- } 
           \{R(c,d), R(e,d)\} \text{ --- } 
           \{R(e,d),R(e,f)\}.
          \end{equation}
     Each bag has exactly two singleton repairs. Multiplying the number of repairs for the bags yields $16$. However, since the bags overlap, this number overestimates the total number of repairs of $\db$. Indeed, $\db$ has only $4$ repairs, the ones listed in:
     \begin{multicols}{2}
        \begin{enumerate}
            \item $\{R(a,b), R(c,d), R(e,f)\}$,
            \item $\{R(c,b), R(e,d)\}$,
            \item $\{R(a,b),R(e,d)\}$,
            \item $\{R(c,b),R(e,f)\}$.
        \end{enumerate}
     \end{multicols}
\end{example}
Hence, the algorithm requires more sophisticated bookkeeping to manage the overlaps between different bags. 
Before moving on, let us define some terminology. %
Let $\db$ be a database and $\rep,\sep$ sets such that $\rep \subseteq \sep \subseteq \db$. We say that $\rep$ is a \emph{max-repair of $\sep$ (in $\db$)} if $\rep$ is a repair of $\sep$, and there is no $\sep'$, $\sep \subsetneq \sep' \subseteq \db$, such that $\rep$ is a repair of $\sep'$. Since we assume that the underlying constraint set $\Sigma$ is anti-monotonic, a set $\rep$ is a repair of $\sep_0 \cup \sep_1$ whenever it is a repair of $\sep_0$ and $\sep_1$ individually.
Thus $\rep$ is a max-repair of some unique set. 
 Given a consistent subset $\rep$ of $\db$, we hence write
 \[
\maxrep{\rep}{\db} \coloneqq \bigcup \{\sep \subseteq \db \mid \rep \text{ is a repair of }\sep\}.
 \]
 In particular, $\maxrep{\rep}{\db}$ is the unique set $\sep$ such that $\rep$ is a max-repair of $\sep$ in $\db$. The following simple lemma provides an alternative characterization for $\maxrep{\rep}{\db}$.
\begin{lemma}\label{lem:desc}
    Let $\Sigma$ be a set of anti-monotonic constraints.
   Let $\rep \subseteq \db$ be consistent over $\Sigma$. Then 
   \[
   \maxrep{\rep}{\db}  = \rep \cup \{A \in \db \mid \rep \cup \{A\}\not\models \Sigma\}.
   \]
 \end{lemma}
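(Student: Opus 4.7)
The plan is to prove both set inclusions directly from the definitions, relying on the anti-monotonicity of $\Sigma$ (as already noted in the paragraph before the lemma) to ensure that $\maxrep{\rep}{\db}$ is a single well-defined set.

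For the inclusion $(\supseteq)$, I would set $\sep := \rep \cup \{A \in \db \mid \rep \cup \{A\} \not\models \Sigma\}$ and verify that $\rep$ is a repair of $\sep$. Containment $\rep \subseteq \sep$ is immediate, and $\rep$ is consistent by hypothesis. For maximality, take any $A \in \sep \setminus \rep$; by construction $\rep \cup \{A\} \not\models \Sigma$, so no consistent superset of $\rep$ inside $\sep$ can contain $A$. Hence $\rep$ is subset-maximal consistent in $\sep$, i.e., $\rep$ is a repair of $\sep$. By the definition of $\maxrep{\rep}{\db}$ as a union of such $\sep$'s, we obtain $\sep \subseteq \maxrep{\rep}{\db}$.

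For the inclusion $(\subseteq)$, let $A \in \maxrep{\rep}{\db}$. By definition there exists $\sep \subseteq \db$ with $A \in \sep$ such that $\rep$ is a repair of $\sep$. If $A \in \rep$, then $A$ lies in the right-hand side. Otherwise $A \in \sep \setminus \rep$, and since $\rep$ is subset-maximal consistent in $\sep$ while $\rep \cup \{A\} \subseteq \sep$, the set $\rep \cup \{A\}$ must fail $\Sigma$; thus $A$ belongs to the right-hand side.

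The only mildly subtle point is making sure the defining union for $\maxrep{\rep}{\db}$ behaves correctly, i.e., that one can freely combine all sets $\sep$ for which $\rep$ is a repair. This is exactly where anti-monotonicity of $\Sigma$ enters: if $\rep$ is a repair of $\sep_0$ and of $\sep_1$, then $\rep$ is a repair of $\sep_0 \cup \sep_1$, so the union in the definition witnesses that $\rep$ is itself a repair of $\maxrep{\rep}{\db}$. No other machinery is required, and the proof should fit in a few lines.
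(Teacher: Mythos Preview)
Your proposal is correct and follows essentially the same approach as the paper: verify both inclusions directly from the definition of $\maxrep{\rep}{\db}$ as a union. The only cosmetic difference is in $(\supseteq)$: the paper handles each $A$ with $\rep\cup\{A\}\not\models\Sigma$ individually by observing that $\rep$ is a repair of the two-element extension $\rep\cup\{A\}$, whereas you bundle all such $A$ into one set $\sep$ and show $\rep$ is a repair of the whole $\sep$; both work, the paper's version just avoids the extra maximality check.
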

 \begin{proof}
     ``$\subseteq$": Let $A \in \sep \setminus \rep$, where $\rep$ is a repair of $\sep$. Then by definition $\rep\cup \{A\}$ is inconsistent, meaning that $\rep \cup \{A\}\not\models \Sigma$.
     ``$\supseteq$": Let $A\in \db$ be a fact such that $\rep \cup \{A\}\not\models \Sigma$. Then $\rep$ is a repair of $\rep \cup \{A\}$, and thus $A \in \maxrep{\rep}{\db}$.
 \end{proof}
 If both $\rep$ and $\db$ are consistent, we note that $\rep$ is a max-repair of itself. 
  Consider then a tree decomposition $T=(B,E)$ of the solution-conflict hypergraph of $(\db,\Sigma,q)$. %
  We fix some root $r$ of $T$.
  For each bag $b \in B$, let $\child{b}$ be the set of bags that are children of $b$, and let $\sub{b}$ be the set of nodes included in the subtree rooted at $b$. We extend this notation to sets of bags $S$ by writing $\sub{S}\coloneqq \bigcup_{b \in S} \sub {b}$. Furthermore, if $C \subseteq \child{b}$, we then write $\subtwo{b}{C}$ for $\{b\} \cup \sub{C}$. To avoid repetitive use of the union symbol, we also write $\usub{b}\coloneqq\bigcup \sub{b}$, $\usub{S}\coloneqq \bigcup \sub{S}$, and $\usubtwo{b}{S}\coloneqq \bigcup \subtwo{b}{S}$, for the aforementioned sets of bags.

 Let $b\in B$, $C \subseteq \child{b}$ and $r \subseteq s \subseteq b$. Let $c$ be the least element of $C$ according to some ordering.
 Define 
 \begin{align*}
\f(r,s,b,C) &= \begin{cases}
h(r,s,b,C)
& \text{ if $C=\emptyset$,}\\
       \sum_{\substack{s' \cup  s'' = s \cap  c\\ r \cap c\subseteq s' \cap s''}}
    \f(r,(s \setminus c)\cup s',b, C \setminus \{c\}) \cdot \g(r \cap c, s'',b,c)
    & \text{ otherwise,}
\end{cases}\\
\text{where }h(r,s,b,C)&=
    \begin{cases}
    1 & \text{ if $r$ is a max-repair of $s$ in $b$ such that $r \models q$,}\\
    0 & \text{ otherwise.}
\end{cases} 
\end{align*}
Moreover, given $b\in B$, $c \in \child{b}$, and $r \subseteq s \subseteq b \cap c$,  define
\begin{equation*}
  \g(r,s,b,c) = \sum_{
r' \subseteq c \setminus b
}
\f(r\cup r',s \cup (c\setminus b),c, \child{c}). 
\end{equation*}
We will use these functions to compute the number of repairs of $\db$ satisfying $q$. To show that the functions work as intended, we next define a set of extensions of a set $r$ into a max-repair.
Given a subset of bags $S\subseteq E$ and sets $r \subseteq s \subseteq d \subseteq \usub{S}$, we define $\smax{r}{s}{d}{S}$ as the set
\begin{equation}\label{eq:eset}
    \{r' \subseteq \usub{S} \setminus d \mid \maxrep{r\cup r'}{\usub{S}}=s\cup  (\usub{S}\setminus d), \forall b \in S \forall e \in E_{\rm s}: e \not\subseteq (r \cup r') \cap b\}.
\end{equation}
In words, $\smax{r}{s}{d}{S}$ consists of all subsets $r'$ of $\usub{S}\setminus d$, where $r\cup r'$ is a max-repair of $s \cup (\usub{S}\setminus d)$, and its intersection with any bag of $S$ does not contain any solution-edge of $T$ (see \cref{fig:eset}).
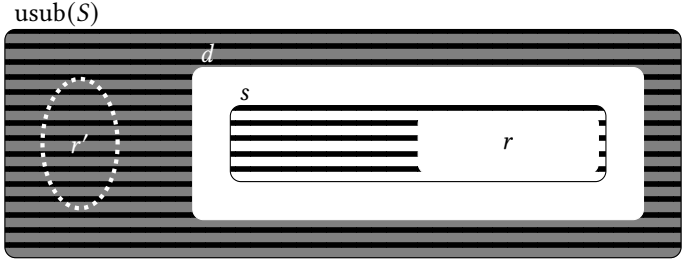
\begin{figure}
\centering

\begin{tikzpicture}%

	\filldraw[gray,rounded corners] (0,0) -- (9,0) -- (9,3) -- (0,3) -- cycle;
	\filldraw[rounded corners,pattern={Lines[
                  distance=2mm,
                  angle=45,
                  line width=0.7mm
                 ]},opacity=.1,
        pattern color=black] (0,0) -- (9,0) -- (9,3) -- (0,3) -- cycle;
	\node at (.7,3.2) {$\usub{S}$};
	
	\draw[ultra thick,dotted,white] (1,1.5) ellipse (.5cm and .85cm);
	\node at (1,1.5) {\textcolor{white}{$r'$}};
	
	\filldraw[gray,rounded corners] (2.5,0.5) -- (8.5,0.5) -- (8.5,2.5) -- (2.5,2.5) -- cycle;
	\filldraw[white,rounded corners,opacity=.5] (2.5,0.5) -- (8.5,0.5) -- (8.5,2.5) -- (2.5,2.5) -- cycle;
	\node at (2.7,2.7) {\textcolor{white}{$d$}};
	
	\filldraw[white,rounded corners,opacity=.5] (3,1) -- (8,1) -- (8,2) -- (3,2) -- cycle;
	\filldraw[rounded corners,pattern={Lines[
                  distance=2mm,
                  angle=45,
                  line width=0.7mm
                 ]},opacity=.1,
        pattern color=black] (3,1) -- (8,1) -- (8,2) -- (3,2) -- cycle;
	\node at (3.2,2.15) {$s$};
	
	\filldraw[white,rounded corners,opacity=.8] (5.5,1.1) -- (7.9,1.1) -- (7.9,1.9) -- (5.5,1.9) -- cycle;
	\node at (6.7,1.5) {$r$};
	
\end{tikzpicture}

\caption{Given a subset of bags $S\subseteq E$ and sets $r \subseteq s \subseteq d \subseteq \usub{S}$, $\smax{r}{s}{d}{S}$ consists of all $r'\subseteq \usub{S}\setminus d$ such that $r \cup r'$ form a max-repair of the hatched area $s\cup (\usub{S}\setminus d)$, and $e \not\subseteq(r\cup r')\cap b$ for any $b\in S,e\in E_s$. \label{fig:eset}}
\end{figure}%
The key technical lemma is stated below, with its proof provided in \Cref{app:solconfhyp}.
\begin{restatable}{lemma}{biglemma}\label{lem:induction}The following equalities are true:
        \begin{enumerate}
        \item\label{it:one} $\f(r,s,b,C) = \nmax{r}{s}{b}{\subtwo{b}{C}}$,
\item\label{it:three} $\g(r,s,b,c)=
        \nmax{r}{s}{b\cap c}{\sub{c}}$.
    \end{enumerate}
\end{restatable}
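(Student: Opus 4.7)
The plan is to establish parts~(\ref{it:one}) and~(\ref{it:three}) simultaneously by strong induction on the number of bags in the relevant subtree --- on $|\subtwo{b}{C}|$ for part~(\ref{it:one}) and on $|\sub{c}|$ for part~(\ref{it:three}). The recursion is well-founded: when $C\neq\emptyset$, $\f$ calls $\g$ on the subtree $\sub{c}$ and $\f$ on $\{b\}\cup\sub{C\setminus\{c\}}$, both strictly smaller than $\subtwo{b}{C}$, while $\g$ calls $\f$ on $\subtwo{c}{\child{c}}=\sub{c}$, which is strictly smaller than any $\subtwo{b}{C}$ containing~$c$. For the base case of~(\ref{it:one}) with $C=\emptyset$, the only candidate for $r'\in\smax{r}{s}{b}{\{b\}}$ is $r'=\emptyset$, and the defining conditions collapse to $\maxrep{r}{b}=s$ together with the absence of any solution-edge inside $r$, which I would verify matches the indicator $h(r,s,b,\emptyset)$ directly from definitions.

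For the inductive step of~(\ref{it:one}), fix the least child $c\in C$ and let $C':=C\setminus\{c\}$. The tree-decomposition property makes $b\cap c$ a separator between $\usub{\sub{c}}$ and $\usub{\subtwo{b}{C'}}$ inside $\usub{\subtwo{b}{C}}$, so each $r'\in\smax{r}{s}{b}{\subtwo{b}{C}}$ decomposes uniquely as $r'=r'_{\mathrm{p}}\sqcup r'_{\mathrm{c}}$, where $r'_{\mathrm{c}}:=r'\cap\usub{\sub{c}}$ and $r'_{\mathrm{p}}:=r'\setminus\usub{\sub{c}}$. I associate to this decomposition the pair
\[
 s':=\maxrep{r\cup r'_{\mathrm{p}}}{\,\usub{\subtwo{b}{C'}}\,}\cap c,\qquad s'':=\maxrep{(r\cap c)\cup r'_{\mathrm{c}}}{\,\usub{\sub{c}}\,}\cap b,
\]
and show that this yields a bijection between $\smax{r}{s}{b}{\subtwo{b}{C}}$ and the disjoint union, over admissible pairs $(s',s'')$ matching the constraints in the recursion, of the product $\smax{r}{(s\setminus c)\cup s'}{b}{\subtwo{b}{C'}}\times\smax{r\cap c}{s''}{b\cap c}{\sub{c}}$. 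Invoking the induction hypothesis to rewrite the two factors as $\f(r,(s\setminus c)\cup s',b,C')$ and $\g(r\cap c,s'',b,c)$ then matches the recursive clause defining $\f(r,s,b,C)$ term-by-term.

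Part~(\ref{it:three}) follows in a single step. Every $\rho\in\smax{r}{s}{b\cap c}{\sub{c}}$ splits uniquely as $\rho=\rho_0\sqcup\rho_1$ with $\rho_0:=\rho\cap(c\setminus b)$ and $\rho_1:=\rho\setminus c$. Reindexing by $r':=\rho_0\subseteq c\setminus b$ shows that the remaining piece $\rho_1$ ranges exactly over $\smax{r\cup r'}{s\cup(c\setminus b)}{c}{\sub{c}}$, whose cardinality equals $\f(r\cup r',s\cup(c\setminus b),c,\child{c})$ by~(\ref{it:one}) on a strictly smaller subtree. Summing over $r'$ then recovers the definition of $\g(r,s,b,c)$.

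The hard part will be the bijection in the inductive step of~(\ref{it:one}), and particularly the characterisation of admissibility as $s'\cup s''=s\cap c$ with $r\cap c\subseteq s'\cap s''$. Using \Cref{lem:desc}, a boundary fact $A\in b\cap c$ lies in $\maxrep{r\cup r'}{\usub{\subtwo{b}{C}}}$ iff $r\cup r'\cup\{A\}\not\models\Sigma$; the tree-decomposition guarantee that every conflict-hyperedge is contained in some single bag, combined with the separator property at $b\cap c$, forces any witnessing conflict to sit entirely on one side of $b\cap c$. This is precisely why both $s'$ and $s''$ may independently include $A$, explaining the \emph{union} condition $s'\cup s''=s\cap c$ (rather than a partition) and the forced overlap $r\cap c\subseteq s'\cap s''$ for facts already in $r$, which always lie in every max-repair region containing them.
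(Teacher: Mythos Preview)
Your proposal is correct and follows essentially the same route as the paper: the same decomposition $r'\mapsto(r'\cap\usubtwo{b}{C'},\,r'\cap\usub{c})$, the same definitions of $s',s''$ via restricted max-repairs intersected with the interface $b\cap c$, and the same use of \Cref{lem:desc} plus the separator property to obtain $s'\cup s''=s\cap c$ with $r\cap c\subseteq s'\cap s''$. The only organisational difference is that you run a single strong induction on $|\subtwo{b}{C}|$, whereas the paper uses a structural induction on the tree with an inner induction on $|C|$; your packaging is arguably cleaner but the content is identical.
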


Since every minimal solution is contained within a single bag of the tree decomposition, the query's satisfaction can be verified in a modular fashion, as the next simple lemma shows.
\begin{lemma}\label{lem:inbag}
    Consider a tree decomposition $T=(B,E)$
    of the solution-conflict hypergraph of $(\db,\Sigma,q)$, where $q$ is a monotonic Boolean query.
    Let $\rep \subseteq \db$. Then
    $\rep \models q \text{ if and only if }\rep \cap b \models q\text{ for some }b\in B.$
\end{lemma}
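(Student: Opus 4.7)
The plan is to prove both directions separately, each hinging on one of the two defining features of the setup: the monotonicity of $q$ for the easy direction, and the subsumed-edge property of the tree decomposition together with the definition of solution-hyperedges for the harder direction.

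For the ``if'' direction, suppose $\rep \cap b \models q$ for some bag $b \in B$. Since $\rep \cap b \subseteq \rep$ and $q$ is monotonic, we immediately get $\rep \models q$. This is a one-line observation.

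For the ``only if'' direction, suppose $\rep \models q$. Because $\rep$ is finite and $q$ is monotonic, there must exist a subset $e \subseteq \rep$ that is minimal with respect to satisfying $q$, i.e., $e \models q$ but $e' \not\models q$ for every $e' \subsetneq e$. By \Cref{def:sol-conf-graph}, such an $e$ is precisely a solution-hyperedge of $\hyperGraphRepr{\calI}$. By property~\ref{it:subsumed} of tree decompositions (\Cref{def-tw}), every hyperedge is contained in some bag, so there is a bag $b \in B$ with $e \subseteq b$. Since $e \subseteq \rep$ as well, we obtain $e \subseteq \rep \cap b$, and monotonicity of $q$ then yields $\rep \cap b \models q$, as required.

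There is no real obstacle here: the statement is essentially a reformulation of the fact that the solution-conflict hypergraph was designed so that its solution-hyperedges are exactly the minimal witnesses for $q$, and tree decompositions were designed so that every hyperedge fits inside a single bag. The only subtlety to be careful about is guaranteeing the existence of a \emph{minimal} witness $e$, which follows from $\rep$ being finite (a finite database), so that any descending chain of subsets satisfying $q$ terminates.
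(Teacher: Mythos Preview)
Your proof is correct and follows essentially the same approach as the paper's: both directions use monotonicity, and the forward direction finds a solution-hyperedge $e \subseteq \rep$, locates a bag $b$ containing it via the tree-decomposition subsumption property, and concludes $\rep \cap b \models q$ by monotonicity. Your version is slightly more explicit about the existence of a minimal witness (via finiteness of $\rep$), which the paper leaves implicit.
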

\begin{proof}
    ``$\Leftarrow$": Follows by monotonicity of $q$. ``$\Rightarrow$": Suppose $\rep \models q$, and let $e\in E$ be some solution-edge witnessing this. In particular, $e \subseteq \rep$ and $e\models q$. By \crefdefpart{def-tw}{it:subsumed} we find some bag $b\in B$ such that $e \subseteq b$. By monotonicity of $q$, $\rep \cap b \models q$. 
\end{proof}

 The previous lemmas thus lead us to the following characterisation.
\begin{lemma}
Let $\Sigma$ be a set of anti-monotonic constraints, $\db$ a database, and $q$ a monotonic Boolean query.
    Consider a tree decomposition of the solution-conflict hypergraph of $(\db,\Sigma,q)$ that is rooted at $a$. 
    Then $\sum_{r \subseteq a}\f(r,a,a,\child{a})$ is the number of repairs $\rep$ of $\db$ such that $\rep \not\models q$. 
\end{lemma}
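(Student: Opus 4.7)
The plan is to combine Lemma~\ref{lem:induction}(1) with Lemma~\ref{lem:inbag} and then read off the result by a simple double-counting argument. Since the work is already packaged in Lemma~\ref{lem:induction}, the proof should be essentially a matter of unpacking definitions and partitioning the set of repairs by their intersection with the root bag $a$.

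First I would apply Lemma~\ref{lem:induction}(1) with $b=a$ and $C=\child{a}$, obtaining
\[
\f(r,a,a,\child{a}) \;=\; |\smax{r}{a}{a}{\subtwo{a}{\child{a}}}|.
\]
Since $a$ is the root of $T$, we have $\subtwo{a}{\child{a}} = B$, and by condition (1) of \cref{def-tw} every fact of $\db$ appears in some bag, so $\usub{\subtwo{a}{\child{a}}} = \db$. Consequently, $s\cup (\usub{S}\setminus d)$ with $s=d=a$ and $\usub{S}=\db$ collapses to $\db$, and the membership condition for $\smax{r}{a}{a}{\subtwo{a}{\child{a}}}$ becomes: $r' \subseteq \db\setminus a$, $\maxrep{r\cup r'}{\db} = \db$, and no solution-edge is contained in $(r\cup r')\cap b$ for any bag $b$.

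Next I would interpret these two conditions. The equality $\maxrep{r\cup r'}{\db} = \db$ means (by the very definition of $\maxrep{\cdot}{\db}$, which is only defined on consistent subsets) that $r\cup r'$ is consistent and is a repair of $\db$. By Lemma~\ref{lem:inbag}, requiring that no solution-edge sits inside $(r\cup r')\cap b$ for any bag $b$ is equivalent to $r\cup r' \not\models q$. Hence $\smax{r}{a}{a}{\subtwo{a}{\child{a}}}$ is exactly the set of $r' \subseteq \db\setminus a$ such that $r\cup r'$ is a repair of $\db$ falsifying $q$.

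Finally, the map $\rep \mapsto (\rep \cap a,\, \rep \setminus a)$ is a bijection between the set of repairs of $\db$ that falsify $q$ and the set of pairs $(r,r')$ with $r\subseteq a$ and $r' \in \smax{r}{a}{a}{\subtwo{a}{\child{a}}}$: every repair $\rep$ with $\rep\not\models q$ is hit exactly once, for $r = \rep \cap a$ and $r' = \rep \setminus a$, and conversely every such pair yields a unique repair $r \cup r'$ falsifying $q$. Summing the cardinalities over $r\subseteq a$ therefore gives precisely the number of repairs of $\db$ that do not satisfy $q$, establishing the claim. The only non-trivial ingredient is Lemma~\ref{lem:induction}, which handles the recursion across the tree decomposition; everything else is essentially bookkeeping.
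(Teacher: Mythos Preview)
Your proposal is correct and follows essentially the same approach as the paper: apply \cref{lem:induction}(\ref{it:one}) at the root, observe that $\subtwo{a}{\child{a}}=B$ and $\usub{B}=\db$, reduce the max-repair condition to ``$r\cup r'$ is a repair of $\db$'', use \cref{lem:inbag} (together with the monotonicity of $q$) to turn the solution-edge condition into $r\cup r'\not\models q$, and finish by partitioning repairs according to their intersection with $a$. If anything, you are slightly more explicit than the paper about the passage from ``no solution-edge inside $(r\cup r')\cap b$'' to ``$(r\cup r')\cap b\not\models q$'', which the paper leaves implicit.
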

\begin{proof}
    Denote by $T=(B,E)$ the tree decomposition. 
    Note that $\rep$ is a repair of $\db$ if and only if it is a max-repair of $\db$ in $\db$. 
    Thus, by \crefdefpart{lem:induction}{it:one} and \cref{lem:inbag} we obtain
    \begin{align*}
    &\f(r,a,a,\child{a})=        \nmax{r}{a}{a}{\sub{a}}=\nmax{r}{a}{a}{B} =|\{r' \subseteq \usub{B} \setminus a \mid \maxrep{r\cup r'}{\usub{B}}\\
    &=a\cup  (\usub{B}\setminus a)\text{ and }\forall b \in B: (r \cup r') \cap b\not\models q\}|\\
    &=    |\{r' \subseteq \db \setminus a \mid {r\cup r'}\text{ is a repair of }\db\text{ and }r \cup r'\not\models q\}|.
    \end{align*}
    Consequently, $\sum_{r \subseteq a}\f(r,a,a,\child{a}) = |\{\rep \subseteq \db \mid \rep\text{ is a repair of }\db\text{ and }\rep\not\models q\}|.$
\end{proof}
\Cref{alg:falsify} now computes the number of repairs that falsify a query by computing the necessary values of functions $f$ and $g$ for each bag of the tree decomposition. The correctness of the algorithm follows from the previous result. Note that in the algorithm, given a set $S$, we write $\vec{S}$ for any sequence listing the elements of $S$ (without repetitions). For a sequence $\vec{a}$ and an element $b$,
we also denote by $\vec{a}b$ the sequence formed by appending $\vec{a}$ with $b$.

\begin{algorithm}
\caption{\textsc{NumberFalsify}$(\db,\Sigma,q)$\label{alg:falsify}}
    \KwIn{A database $\db$, a set of anti-monotonic constraints $\Sigma$, a monotonic Boolean query $q$\;}
    \KwRequire{
    A tree decomposition $T=(B,E)$ of $\hyperGraphRepr{\db,\Sigma,q}$ rooted at $a$\,
    }
    \KwOut{The number of repairs of $\db$ over $\Sigma$ falsifying $q$}
    Initialise both $\mathsf{f}$ and $\mathsf{g}$ as the empty functions\;
    \For{$b\in B$}{
                \For{$r,s \subseteq b$ such that $r \subseteq s$}{
      \If{$r$ is a max-repair of $s$ in $b$ such that it contains no solution edge of $\hyperGraphRepr{\db,\Sigma,q}$}{
           $\mathsf{f}(r, s, b, ()) \gets 1$\;
       }
       \lElse{
           $\mathsf{f}(r, s, b, ()) \gets 0$
    }
    }
    Mark $(b,())$\;
    }
    \While{exists $b\in B$ and $c \in \child{b}$ such that $(b,c)$ is not marked but $(c,\vec{\child{c}})$ is marked}{
    \lFor{$r,s \subseteq b\cap c$ such that $r \subseteq s$}{
     $\mathsf{g}(r,s,b,c) = \sum_{
r' \subseteq c \setminus b
}
\mathsf{f}(r\cup r',s \cup (c\setminus b),c, \vec{\child{c}}) $
}
Mark $(b,c)$\;
    }
    \While{exists $b,c\in B$ and an initial segment $\vec{d}c$ of  $\vec{\child{b}}$ such that $(b,\vec{d}c)$ is not marked but $(b,\vec{d})$ and $(b,c)$ are marked
    }{
    \For{$r,s \subseteq b$ such that $r \subseteq s$}{
     $\mathsf{f}(r,s,b,\vec{d}c) = \sum_{\substack{s' \cup  s'' = s \cap  c\\ r \cap c\subseteq s' \cap s''}}
    \mathsf{f}(r,(s \setminus c)\cup s',b, \vec{d}) \cdot \mathsf{g}(r \cap c, s'',b,c)$
}
Mark $(b,\vec{d}c)$\;
    }
    \Return $\sum_{r \subseteq a}\mathsf{f}(r,a,a,\vec{\child{a}})$\;
\end{algorithm}

Let us examine the algorithm through the lens of our running example.
\begin{example}
    Consider the set of keys $\Sigma$ from Ex.~\ref{ex:running}, and a database 
    $\db'=\{R(a,b), R(c,b), R(c,d)\}.$
     It is clear that $\db'$ has two repairs over $\Sigma$, $\{R(a,b),  R(c,d)\}$ and $\{R(c,b)\}$. Consider the tree decomposition
    $\{R(a,b), R(c,b)\} - \{R(c,b), R(c,d)\}.$ 
     Taking $\{R(a,b), R(c,b)\}$ as the root, we demonstrate that $\textsc{NumberFalsify}(\db,\Sigma,q_{\bot})$ outputs $2$ as required. Let us denote $b=\{R(a,b), R(c,b)\}$ and $c= \{R(c,b),R(c,d)\} $. 
     
     During the first for-loop the algorithm sets $\mathsf{f}(r, s, b_0, ())=1$ if and only if $r$ is a singleton and $s=b_0$, for all bags $b_0$.
    The first while-loop then computes 
        \begin{align*}
           \mathsf{g}(\emptyset,\emptyset,b,c)=& \sum_{
r' \subseteq c \setminus b
}
\mathsf{f}( r', c\setminus b,c, ()) = \mathsf{f}( \emptyset, c\setminus b,c, ()) + \mathsf{f}( c\setminus b, c\setminus b,c, ())=0+0=0,\\
\mathsf{g}(\emptyset,b \cap c,b,c)=& \sum_{
r' \subseteq c \setminus b
}
\mathsf{f}( r', c,c, ()) = \mathsf{f}( \emptyset, c,c, ())+ \mathsf{f}( c\setminus b, c,c, ())=0+1=1,\\
\mathsf{g}(b\cap c,b\cap c,b,c) = &\sum_{
r' \subseteq c \setminus b
}
\mathsf{f}((b \cap c)\cup r',c,c, ()) = \mathsf{f}(b \cap c,c,c,())+ \mathsf{f}(c,c,c, ())=1+0=1.\\
        \end{align*}   
After this, the second while-loop computes
\begin{align*}
      \mathsf{f}(b,b,b,(c)) = &       \sum_{s' \cup  s'' = b \cap  c
    }\overbrace{\mathsf{f}(b,(b \setminus c)\cup s',b, ())}^0 \cdot \mathsf{g}(b \cap c, s'',b,c)=0,\\
      \mathsf{f}(\emptyset,b,b,(c)) = &       \sum_{s' \cup  s'' = b \cap  c
    }\overbrace{\mathsf{f}(\emptyset,(b \setminus c)\cup s',b, ())}^0 \cdot \mathsf{g}(\emptyset, s'',b,c)=0,\\
         \mathsf{f}(b\setminus c,b,b,(c)) = &       \sum_{s' \cup  s'' = b \cap  c
    }\mathsf{f}(b\setminus c,(b \setminus c)\cup s',b, ()) \cdot \mathsf{g}(\emptyset, s'',b,c)\\
    = &\, \mathsf{f}(b \setminus c,b \setminus c,b, ()) \cdot \mathsf{g}(b \cap c, b \cap c,b,c)
     + \mathsf{f}(b\setminus c,b,b, ()) \cdot \mathsf{g}(\emptyset, \emptyset,b,c)\\
     &\,+ \mathsf{f}(b\setminus c,b,b, ()) \cdot \mathsf{g}(\emptyset, b \cap c,b,c)\\
                      =&\, 0\cdot 1 + 1 \cdot 0 + 1 \cdot 1=1,\\
         \mathsf{f}(b \cap c,b,b,(c)) = &       \sum_{s' \cup  s'' = b \cap  c 
    }\mathsf{f}(b \cap c,(b \setminus c)\cup s',b, ()) \cdot \mathsf{g}(b \cap c, s'',b,c)\\
    =&\, \mathsf{f}(b\setminus c,b \setminus c,b, ()) \cdot \mathsf{g}(b \cap c, b\cap c,b,c)
        +\, \mathsf{f}(b \cap c,b,b, ()) \cdot \mathsf{g}(b \cap c, b \cap c,b,c)\\
                  =&\, 0\cdot 1  + 1 \cdot 1=1.
    \end{align*}
Finally, as intended, the algorithm returns
\[
\sum_{r \subseteq a}\mathsf{f}(r,b,b,(c))= \mathsf{f}(\emptyset,b,b,(c))+\mathsf{f}(b \setminus c,b,b,(c))+\mathsf{f}(b \cap c,b,b,(c)) + \mathsf{f}(b,b,b,(c)) = 0+1+1+0=2.
\]
\end{example}

We can now present the main result of the section.
\begin{theorem}\label{lem:time}
Fix a set of anti-monotonic constraints $\Sigma$ and a monotonic Boolean query $q$. 
Given a database $\db$ and a tree decomposition $T=(B,E)$
    of the solution-conflict hypergraph of $(\db,\Sigma,q)$, the number of repairs $\rep$ of $\db$ over $\Sigma$ satisfying $q$ can be computed in time $|E|\cdot O( 3^{2k})$, where $k$ is the maximum size of a bag in $B$.
\end{theorem}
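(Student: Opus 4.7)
The plan is to reduce counting repairs that satisfy $q$ to two runs of \Cref{alg:falsify}. The first run is as stated: applied to $(\db, \Sigma, q)$, it returns the number of repairs that falsify $q$. For the second run, replace $q$ with a trivially unsatisfiable query (equivalently, remove all solution-hyperedges, so that $h(r,s,b,\emptyset)=1$ precisely when $r$ is a max-repair of $s$ in $b$ without any additional query condition); this returns the total number of repairs of $\db$ over $\Sigma$. Subtracting the two outputs yields the desired count. The same tree decomposition $T=(B,E)$ works for both runs: removing hyperedges from the solution-conflict hypergraph only relaxes the condition in \crefdefpart{def-tw}{it:subsumed}, so $T$ remains a tree decomposition of the smaller hypergraph and its width does not increase. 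Correctness of each run is immediate from the preceding lemma (applied to $q$ and to the false query, respectively).

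The bulk of the proof is the time analysis of a single run of \Cref{alg:falsify}. Since $\Sigma$ and $q$ are fixed, each local test used by the algorithm---whether $r$ is a max-repair of $s$ in $b$ (use \Cref{lem:desc}), whether $r\models q$, and whether $r\cap b$ contains a solution hyperedge---runs in $O(1)$ time. For each bag $b$, the number of pairs $(r,s)$ with $r\subseteq s\subseteq b$ is at most $3^{|b|}\le 3^k$, since each element of $b$ falls into one of $r$, $s\setminus r$, or $b\setminus s$. Hence the initial \textbf{for}-loop costs $O(|B|\cdot 3^k)$.

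Next, I would bound the two \textbf{while}-loops edge-by-edge in $T$. For the first loop, each tree edge $(b,c)$ triggers the computation of $\mathsf{g}(r,s,b,c)$ for all $(r,s)$ with $r\subseteq s\subseteq b\cap c$ (at most $3^k$ pairs), and each value is a sum over $r'\subseteq c\setminus b$ (at most $2^k$ terms), contributing $O(6^k)$ per edge. For the second (dominant) loop, each tree edge $(b,c)$ corresponds to appending $c$ to an initial segment $\vec d$ of $\vec{\child b}$ and computes $\mathsf{f}(r,s,b,\vec d c)$ for all $(r,s)$ with $r\subseteq s\subseteq b$ (at most $3^k$ pairs). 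Each such value is a sum over decompositions $s'\cup s''=s\cap c$ subject to $r\cap c\subseteq s'\cap s''$. I would observe that every element of $s\cap c$ lies in exactly one of $s'\setminus s''$, $s''\setminus s'$, or $s'\cap s''$, so the number of admissible decompositions is at most $3^{|s\cap c|}\le 3^k$. Thus each edge costs $O(3^{2k})$, giving $O(|E|\cdot 3^{2k})$ in total and dominating the other contributions. Since $|B|=|E|+1$ for a tree, the final bound is $|E|\cdot O(3^{2k})$.

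The main obstacle in the complexity argument is the counting inside the $\mathsf{f}$-recurrence: the constraint $r\cap c\subseteq s'\cap s''$ still leaves the remaining elements of $s\cap c$ with full three-way freedom, so the per-step work is $3^k\cdot 3^k$ rather than something smaller; the theorem's $3^{2k}$ is tight for this recurrence and cannot be improved without changing the algorithm. Everything else---the max-repair and solution-edge checks, the bag/edge bookkeeping, and the final subtraction of two integers of polynomial bit-length---is absorbed into the bound, so the overall runtime is $|E|\cdot O(3^{2k})$ as claimed.
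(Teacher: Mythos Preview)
Your proposal is correct and follows essentially the same approach as the paper: two runs of \textsc{NumberFalsify} (once with $q$, once with the unsatisfiable query $q_\bot$) using the same tree decomposition, followed by subtraction, with the time bound driven by the second \textbf{while}-loop at $O(3^{2k})$ per tree edge. One minor slip: the local tests (max-repair check via \Cref{lem:desc}, solution-edge containment) are not $O(1)$ but polynomial in $k$, as the paper notes with its $k^c$ factor; this is harmless since it is absorbed into the dominating $3^{2k}$ term.
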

\begin{proof}
Clearly, $T$ is also a tree decomposition of the conflict hypergraph of $(\db,\Sigma)$. Such a hypergraph can be viewed as the
solution-conflict hypergraph of $(\db,\Sigma,q_\bot)$, where $q_\bot\coloneqq \bot$, i.e., $\db \not\models q_\bot$ for all $\db$.
In particular, $\textsc{NumberFalsify}(\db,\Sigma,q_{\bot})$ computes the number of repairs of $\db$ over $\Sigma$. Thus, the number of repairs $\rep$ of $\db$ over $\Sigma$ satisfying $q$ is obtained as
\[
\textsc{NumberFalsify}(\db,\Sigma,q_{\bot}) - \textsc{NumberFalsify}(\db,\Sigma,q).
\]
To analyse the time complexity, note first that
there are $3^k$ ways to find $r,s\subseteq b$ such that $r\subseteq s$. Since $\Sigma$ and $q$ are fixed, the first for-loop can be computed in time $|B|\cdot O( k^c\cdot 3^{k})$ for some constant $c$. The first while-loop computes in time $|E|\cdot O( 3^{k}\cdot 2^k)$, and the second in time $|E|\cdot O( 3^{2k})$; in particular, the subscripts of the sums hide for-loops. The time complexity of the second while-loop dominates the others, and hence the statement of the lemma follows.
\end{proof}

   We obtain as a corollary that $\ncqa(\Sigma,q)$ is fixed-parameter tractable in the treewidth of the solution-conflict hypergraph with respect to many common query and constraint classes.
   Specifically, with respect to data complexity the solution-conflict hypergraphs exhibit no fundamental distinction  between conjunctive queries and their unions (as such queries are monotonic), or between keys, functional dependencies, and denial constraints (as such constraints are anti-monotonic).
   The next result demonstrates one example corollary. %
We write $\twH(\calI)$ for the treewidth of the solution-conflict hypergraph of  $\calI=(\db,\Sigma,q)$. %
\begin{corollary}\label{cor:FPT-solution-conflict}
Let $\Sigma$ be a set of DCs and $q$ an $\bucq^{\neq}$.
Then, the problems $\ncqa(\Sigma,q)$ and $\ncqa(\FD,q)$ are fixed-parameter tractable when parameterised by $\twH$.
\end{corollary}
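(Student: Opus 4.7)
The plan is to invoke \cref{lem:time} directly. First I would verify that its hypotheses hold: every DC is anti-monotonic (being a universal denial of positive conjunctions), FDs are a special case of DCs, and every $\bucq^{\neq}$ query is monotonic, since adding facts can only produce more witnesses for the existential positive part while the inequality atoms constrain terms rather than facts. Hence $(\Sigma, q)$ satisfies the monotonic / anti-monotonic assumptions of \cref{lem:time} in both claimed cases.

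Second, I would argue that $\hyperGraphRepr{\db,\Sigma,q}$ can be constructed in polynomial time in $|\db|+|\Sigma|$. For $\ncqa(\Sigma,q)$ we are in the data-complexity setting, so $\Sigma$ and $q$ are fixed; each minimal conflict then has size bounded by a constant, and each minimal solution has size bounded by the number of atoms in $q$, giving $|\db|^{O(1)}$ hyperedges, each of constant size. For $\ncqa(\FD,q)$ the constraint set $\Sigma$ consists of FDs, so every minimal conflict is an edge of size two; together with the fixed query $q$, this again yields a hypergraph of size polynomial in $|\db|+|\Sigma|$.

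Third, on the resulting hypergraph I would compute a tree decomposition of width $\twH$ using Bodlaender's FPT algorithm, exploiting that the treewidth of a hypergraph coincides with that of its primal graph. Plugging this decomposition into the algorithm behind \cref{lem:time}, and subtracting $\textsc{NumberFalsify}(\db,\Sigma,q)$ from $\textsc{NumberFalsify}(\db,\Sigma,q_{\bot})$ as in the proof of that theorem, produces the number of repairs satisfying $q$ in time $f(\twH) \cdot (|\db|+|\Sigma|)^{O(1)}$, establishing FPT in $\twH$.

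The main subtlety is the polynomial-size guarantee for the hypergraph: in a combined-complexity setting with arbitrary DCs, minimal conflicts could a priori be exponentially many and large, and solution-hyperedges scale with $|q|$; this is precisely why the present route yields FPT only for data complexity of $\ncqa(\Sigma,q)$, with $\ncqa(\FD,q)$ admissible only because FDs always produce edge-sized conflicts. Once the hypergraph is constructed, everything else is a direct appeal to \cref{lem:time}.
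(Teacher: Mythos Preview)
Your proposal is correct and follows essentially the same route as the paper: verify that DCs/FDs are anti-monotonic and $\bucq^{\neq}$ queries are monotonic, build the solution-conflict hypergraph in polynomial time (using that minimal conflicts and minimal solutions have size bounded by a constant in the data-complexity setting, and that FDs always yield size-two conflicts when $\Sigma$ is part of the input), compute a tree decomposition in FPT time, and invoke \cref{lem:time}. The only cosmetic difference is that you appeal to Bodlaender's exact algorithm whereas the paper cites the $4(k+1)$-approximation from Cygan et al.; either suffices for the FPT bound.
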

\begin{proof}
We consider the case of $\ncqa(\Sigma,q)$, where the only input is the database $\db$. The case of $\ncqa(\FD,q)$, where the {\FD} set $\Sigma$ is part of the input, is analogous.

First note that $q$ has a fixed bound on the size of its minimal solutions, and similarly %
$\Sigma$ has a fixed bound on the size of its minimal conflicts. Thus the solution-conflict hypergraph $\hgsym$ can be constructed from $\db$ in time $poly(\size{\db})$. Given a graph $G$, one can construct a tree decomposition~$\mathcal{T}$ of~$G$ of treewidth at most~$4 (k+1)$ in time~$2^{O(k)}\cdot poly(\size{G})$ \cite[Thm. 7.18]{DBLP:books/sp/CyganFKLMPPS15}. (Note that ordinary graphs are here interchangeable with hypergraphs, since any tree decomposition of a hypergraph is also a tree decomposition of its primal graph, and vice versa).
Finally, by \cref{lem:time} %
the output of $\ncqa(\Sigma,q)$ can be computed in time $\size{\mathcal{T}}\cdot 2^{O(k)}$. Composing the findings yields a time bound
$poly(\size{\db}) \cdot 2^{O(k)}$, which leads to the statement of the theorem. 
\end{proof}

\section{%
Gaifman Graphs}
In this section, we examine the tractability of the CQA problem using Gaifman graphs derived from MSO descriptions of the problem. 
Using Courcelle's theorem~\cite{DBLP:journals/eatcs/CourcelleE12,DBLP:journals/iandc/Courcelle90,DBLP:journals/algorithmica/BoriePT92}, we then obtain \FPT results parameterised by the treewidth of the Gaifman graph. 
Recall that the \emph{Gaifman graph} of a first-order structure \( \calA \) over a vocabulary $\tau$ is a graph whose vertex set consists of the elements in the domain of \( \calA \), and there is an edge between any two elements \( a \) and \( b \) if and only if there is a vec \( \vec{c} \in R^\calA \), for some \( R \in \tau \), such that \( a \) and \( b \) are distinct elements occurring in \( \vec{c} \).
We note that the Gaifman graphs considered next are independent of the solution-conflict hypergraphs analysed in the previous section.

We fix some terminology first.
Let %
$\db$ be a database over a schema $\sch$, %
$\Sigma$ a set of constraints, and $q$ a $\bcq^{\neq}$ of the form
\begin{equation}\label{eq:bucq}
\exists \vec x \left(\bigwedge_{i\leq s} q_i \land \bigwedge_{s < i\leq s'} t_i \neq u_i\right),
\end{equation}
where $q_i$ is a relational atom, for $i \leq s$. %
For a list of terms $\vec t$, we write $\atomvars{\vec t}$ for the set of variables occurring in $\vec t$.
Relational atoms $q_i(\vec t)$ and $q_j(\vec u)$, $i,j\leq s$, are called \emph{$q$-linked}
if one of the following conditions holds:
\begin{enumerate}
    \item $i=j$,
    \item $\atomvars{\vec t} \cap \atomvars{\vec u}\neq \emptyset$, or
    \item there are variables $x\in \atomvars{\vec t}$ and $y\in \atomvars{\vec u}$ such that $\{t_i,u_i\}=\{x,y\}$ for some $s < i\leq s'$. 
\end{enumerate}%
Furthermore, facts $q_i(\vec a), q_j(\vec b)\in \db$ are called \emph{$q$-consistent} if
\begin{enumerate}
    \item the atoms $q_i(\vec t),q_j(\vec u)\in q$ are $q$-linked, and
    \item $h(\vec{t})= \vec{a}$, $h(\vec{u})=\vec{b}$, and $h( t_i)\neq h( u_i)$, for $s<i\leq s'$, for some assignment $h$. %
\end{enumerate}
Notice that, we use the expression $q_i(\vec{a})$ instead of $R(\vec{a})$ to emphasise that the fact is being treated as a potential solution for the atom $q_i(\vec t)$ (instead of, say, another query atom $q_j(\vec{u})$).
\begin{example}
Suppose $q= \exists xy (R(x,z)\land R(u,y) \land x \neq y)$, for variables $x,y,z,u$. Then $q_1(x,z)=R(x,z)$ and $q_2(u,y)=R(u,y)$ are $q$-linked, and $q_1(a,c)$ and $q_2(c,b)$ are $q$-consistent. However, $q_1(c,b)$ and $q_2(a,c)$ are not $q$-consistent, because no assignment $h$ can satisfy the conditions $h(x)=c$, $h(y)=c$, and $h(x)\neq h(y)$ simultaneously.
Suppose $q' = \exists x (R(x)\land x \neq c)$, for a variable $x$ and a constant $c$. Then $q'_1(x) = R(x)$ is vacuously $q'$-linked to itself. However, $q'_1(c)=R(c)$ is not $q'$-consistent (with itself) because the conditions $h(x)=c$ and $h(x)\neq h(c)$ are inconsistent for any assignment $h$ (which by definition has to fix the constants). 
\end{example}

Next we reduce $\cqa(\FD,\bucq^{\neq})$ to model checking of MSO-formulas.
We detail our analysis below, and later extend the reduction to DCs.

\subsection{The Case of Functional Dependencies}
We translate an instance $\calI=(\db,\Sigma,q)$ of $\cqa(\FD,\bucq^{\neq})$ to a pair $(\calA_\calI,\Phi_\calI)$, where $\calA_\calI$ is a first-order structure and $\Phi_\calI$ is an MSO-formula. Consider first a $\bcq^{\neq}$ $q$ of the form \cref{eq:bucq}.
Writing $L= \{(i,j)\mid i, j\leq s, q_i\text{ and } q_j \text{ are $q$-linked}\}$, %
the vocabulary of  %
interest is 
$\tau = \{\depfails[2]\}\cup\{\linked_{i,j}[2]\mid (i,j)\in L \}$.
The structure $\calA_\calI$ over $\tau$ consists of a domain $\db$ and has the following 
interpretations: %
\begin{itemize}
    \item $\depfails^\calA = \{ (f,g) \in \db \mid \{f,g\}\not\models \Sigma\}$, %
    \item $\linked_{i,j}^\calA = \{(q_i(\vec a), q_j(\vec b)) \mid  q_i(\vec a), q_j(\vec b) \in \db \text{ are $q$-consistent} \}$. %
\end{itemize}

 The relation $\depfails$ is actually the set of conflict edges arising from $\db$ and $\Sigma$ (\cref{def:conf-graph}). The relations $\linked_{i,j}$, however, do not correspond to the solution hyperedges arising from $\db$ and $q$ (\cref{def:sol-conf-graph}). On the one hand, two  facts may be joined by $\linked_{i,j}$ even if they do not belong to a common solution-hyperedge of $\hyperGraphRepr{\calI}$. On the other hand, two distinct facts from the same solution-hyperedge are only joined by some $\linked_{i,j}$ if their respective atoms share some variables or are linked by an inequality atom. 
 
 For a triple $\calI=(\db,\Sigma,q)$, %
 let $\twG(\calI)$ denote {the
treewidth of the Gaifman graph of  $\calA_{\calI}$.}
 Below, 
 we  
 establish that the treewidths obtained from $\calA_\calI$ and $\calH_\calI$ are not related. The proof can be found in \cref{sect:appendixFD}.
\begin{restatable}{theorem}{different}\label{thm:tw-different}
        There are families $(\db_n)_{n \in \mathbb{N}}$ and $(\db'_n)_{n \in \mathbb{N}}$ of databases, sets of constraints $\Sigma$ and $\Sigma'$, and $\bcq$-queries $q$ and $q'$ such that
        \begin{enumerate}
            \item  %
            $\twH(\db_n,\Sigma,q)\geq n$ and $\twG(\db_n,\Sigma,q) \leq 1$, and 
            \item %
            $\twG(\db'_n,\Sigma',q')\geq n$ and $\twG(\db'_n,\Sigma',q') \leq 1$.
        \end{enumerate}
\end{restatable}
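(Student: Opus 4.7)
The plan is to give two explicit parametrised constructions, one for each direction, that exploit a complete structural decoupling between the solution-conflict hypergraph $\hyperGraphRepr{\calI}$ and the Gaifman graph of $\calA_\calI$. In both constructions I will take the constraint set to be empty, so that $\depfails^{\calA} = \emptyset$; this removes conflict-edges from $\hyperGraphRepr{\calI}$ and leaves the Gaifman graph depending only on the $\linked_{i,j}$ relations, which cleanly isolates the query--database interaction. I will also interpret the second $\twG$ in item~(2) as $\twH$, which is the only sensible reading.

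For item~(1), I will set $q = \exists xy (R(x) \wedge S(y))$ and $\db_n = \{R(a_1), \ldots, R(a_n), S(b_1), \ldots, S(b_n)\}$. The atoms $q_1 = R(x)$ and $q_2 = S(y)$ share no variable and $q$ has no inequalities, so they are not $q$-linked, and therefore $\linked_{1,2}$ is absent from the vocabulary $\tau$ altogether. Moreover, $\linked_{1,1}^{\calA}$ and $\linked_{2,2}^{\calA}$ will contain only diagonal tuples, since the assignment witnessing $q$-consistency of two copies of a single-variable atom must send that unique variable to the data entries of both facts, forcing them to coincide. Hence the Gaifman graph has no edges and $\twG(\db_n,\emptyset,q) \leq 1$, while the minimal solutions of $q$ are exactly the $n^2$ pairs $\{R(a_i), S(b_j)\}$; so the primal graph of $\hyperGraphRepr{\calI}$ is a copy of $K_{n,n}$ and has treewidth $n$.

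For item~(2), I will set $q' = \exists xyzu (R(x,y) \wedge S(y,z) \wedge T(u))$ and $\db'_n = \{R(a_i, b) : i \in [n]\} \cup \{S(b, c_j) : j \in [n]\}$, with no $T$-facts. Since $\db'_n$ contains no $T$-fact, no subset of $\db'_n$ satisfies $q'$, so $\hyperGraphRepr{\calI}$ has no solution-hyperedges; combined with the empty constraint set this means $\hyperGraphRepr{\calI}$ is edgeless and $\twH \leq 1$. On the other hand, $q'_1 = R(x,y)$ and $q'_2 = S(y,z)$ share the variable $y$ and are thus $q'$-linked, and the assignment $h(x) = a_i,\, h(y) = b,\, h(z) = c_j$ witnesses $(R(a_i, b), S(b, c_j)) \in \linked_{1,2}^{\calA}$ for all $i, j \in [n]$. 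The Gaifman graph therefore contains the complete bipartite subgraph $K_{n,n}$ between the $R$-facts and the $S$-facts, giving $\twG \geq n$.

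The only verifications that need care are (a)~that $\linked_{i,i}$ contributes no non-diagonal tuples in construction~(1), which is immediate from $q$-consistency applied to a single-variable atom; and (b)~that no proper subset of size less than two forms a solution in construction~(2), which is immediate since $q'$ cannot be satisfied without a $T$-fact. The standard treewidth bound $\tw(K_{n,n}) = n$ then closes both inequalities. There is no deep technical obstacle; the conceptual point is that one must choose queries whose $q$-linking structure is completely decoupled from the minimal-solution structure, which the two examples accomplish via, respectively, a disconnected query whose minimal solutions are Cartesian-product pairs but whose atoms share no link, and a linked query whose minimal solutions are blocked by an isolated atom with empty interpretation.
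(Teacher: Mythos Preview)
Your proposal is correct and follows essentially the same approach as the paper. Item~(1) is identical to the paper's construction, and item~(2) differs only cosmetically: the paper uses $q' = \exists xyz\,(R(x,y)\wedge S(y,z)\wedge T(z))$ together with $T$-facts whose constants are chosen so that no $S$--$T$ match (and hence no solution and no $\linked_{2,3}$-edge) exists, whereas you achieve the same effect more directly by giving $T$ a fresh variable and leaving its interpretation empty.
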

 
 Another key distinction between $\calA_{\calI}$ and the solution-conflict hypergraph $\hyperGraphRepr{\calI}$ lies in complexity to construct them. 
 The former requires polynomial time in the size of $\calI$---even when the query $q$ is part of the input---as we only consider pairs of facts when constructing the model (i.e., not every subset of $\db$ has to be considered).
 In contrast, constructing the solution-conflict hypergraph in the same manner is not feasible unless $q$ is considered fixed. 
 Consequently, %
 it seems unlikely that %
  the fixed-parameter tractability of $\ncqa(\FD,q)$ can be extended to even $\ncqa(\FD,\bcq)$ in \cref{cor:FPT-solution-conflict}. 
 In contrast, we will establish fixed-parameter tractability results for $\calA_\calI$ with respect to combined complexity.

We now move on to the MSO-formula $\Phi_\calI$. %
Denoting by $T$ a set variable (depicting a repair), we first construct the following auxiliary formulas
\begin{enumerate}
    \item $\varphi_{\text{sat}}(T) \coloneqq  \forall x\forall y (T(x) \land T(y)\rightarrow \neg\depfails(x,y) ) $,
    \item $\varphi_{\text{$T,q$-sat}}\coloneqq \exists x_1\ldots\exists x_s (\bigwedge_{i\leq s} T(x_i) \land\bigwedge_{(i,j)\in L}  \linked_{i,j}(x_i,x_j) )$,
    \item $\varphi_{\tilde T\supsetneq T}(T, \tilde T) \coloneqq \forall x (T(x) \rightarrow \tilde T(x))\land \exists y (\neg T(y) \land \tilde T(y)) $,
    \item $\varphi_{\text{repair}}(T) \coloneqq \varphi_{\text{sat}}(T)\land \forall \tilde T (\neg \varphi_{\tilde T\supsetneq T} \lor \neg \varphi_{\text{sat}}(\tilde T))$.
\end{enumerate}
Consider now an instance $\calI=(\db,\Sigma,q)$ of $\cqa(\FD,\bucq^{\neq})$, where $q=\bigvee_{\ell\leq m}q_\ell$. 
Then, the formula $\Phi_\calI$ is formed as follows:
       $$\Phi_\calI \coloneqq \forall T\left(\varphi_{\text{repair}}(T) \to \bigvee_{\ell\leq m}  \varphi_{\text{$T,q_{\ell}$-sat}}\right).$$
Note that the construction of $\calA_\calI$ and $\Phi_\calI$ require time even polynomially bounded in the size of $\calI$. 
The proof for the correctness of our encoding, stated next, is deferred to \cref{sect:appendixFD}.
\begin{restatable}{theorem}{correct}\label{thm:mso-correct-fds}
Let $\Sigma$ be a set of FDs, $q$ a $\bucq^{\neq}$, and $\db$ a database. Then all repairs of $\db$ over $\Sigma$ satisfy $q$
    if and only if $\calA_\calI\models \Phi_\calI$.
\end{restatable}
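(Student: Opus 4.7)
The plan is to match the syntactic structure of $\Phi_\calI$ to the statement semantically, reducing the biconditional to two local correspondences. Unfolding $\Phi_\calI = \forall T(\varphi_{\text{repair}}(T) \to \bigvee_\ell \varphi_{\text{$T,q_\ell$-sat}})$, I would aim to show (i) for every $T\subseteq\db$, $\calA_\calI\models\varphi_{\text{repair}}(T)$ iff $T$ is a repair of $\db$ over $\Sigma$; and (ii) for every $T\subseteq\db$ and every disjunct $q_\ell$ of $q$, $\calA_\calI\models\varphi_{\text{$T,q_\ell$-sat}}$ iff $T\models q_\ell$. Once both hold, the disjunctive shape $q=\bigvee_\ell q_\ell$ yields the theorem immediately.

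For (i), I would exploit that FDs are violated by pairs of facts, so $\depfails^\calA$ fully encodes the minimal conflicts. Hence $\varphi_{\text{sat}}(T)$ captures consistency in the standard sense, and the subformula $\forall\tilde T(\neg\varphi_{\tilde T\supsetneq T}\lor\neg\varphi_{\text{sat}}(\tilde T))$ is a direct formalization of subset-maximality. Since FDs are anti-monotonic, repairs coincide with subset-maximal consistent subsets of $\db$, as observed in the preliminaries, so $\varphi_{\text{repair}}$ captures exactly the repairs.

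The substantive work is in (ii), specifically in the direction from pairwise-consistent witnesses to a global assignment. The easy direction takes the witnessing assignment $h$ for $q_\ell$ and sets $f_i := q_i(h(\vec t_i))\in T$; for each $(i,j)\in L$ the restriction of $h$ to the variables in $\vec t_i\cup\vec t_j$ witnesses the $q_\ell$-consistency of $(f_i,f_j)$, so every $\linked_{i,j}(f_i,f_j)$ holds. For the converse, given facts $f_1,\dots,f_s\in T$ with all $\linked_{i,j}(f_i,f_j)$ holding, I would construct an assignment $h$ by reading, for each variable $y$ occurring in $\vec t_i$ at position $p$, the $p$-th component of $f_i$. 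The main obstacle is \emph{well-definedness} when $y$ occurs in two atoms $q_i,q_j$: here condition 2 of $q_\ell$-linkedness puts $(i,j)\in L$, and the partial assignment witnessing $q_\ell$-consistency of $(f_i,f_j)$ forces the components of $f_i$ and $f_j$ at the $y$-positions to agree. Inequalities are then handled pair by pair: for each $t_i\neq u_i$, if $u_i$ is a constant then the reflexive link $(a,a)\in L$ (with $t_i\in\vec t_a$, condition 1) certifies $h(t_i)\neq u_i$; if $u_i$ is a variable in $\vec t_b$, condition 3 puts $(a,b)\in L$ and certifies $h(t_i)\neq h(u_i)$. Since every variable of a $\bcq^{\neq}$ is required to occur in some relational atom, $h$ is defined on all of $\vec x$, yielding $T\models q_\ell$.

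Combining (i) and (ii) with the outer universal quantification and the disjunction over $\ell$ gives that $\calA_\calI\models\Phi_\calI$ precisely when every repair of $\db$ over $\Sigma$ satisfies some disjunct of $q$, i.e., satisfies $q$. The anticipated delicate point throughout is the interplay between the three linking conditions and the need to glue pairwise information into a coherent global assignment; the design of the $\linked_{i,j}$ relations (including the reflexive case) is precisely what makes this gluing work for $\bucq^{\neq}$.
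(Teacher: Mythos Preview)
Your proposal is correct and follows essentially the same approach as the paper. The paper frames the argument contrapositively (some repair falsifies $q$ iff $\calA_\calI\not\models\Phi_\calI$) and builds the global assignment $h$ by an induction on the number of atoms processed, whereas you prove the two local equivalences (i) and (ii) directly and define $h$ globally with a well-definedness check; but the substantive content---using the reflexive links to handle constants and repeated variables within one atom, condition~2 links for variables shared across atoms, and condition~3 links for the inequality atoms---is identical in both.
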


For a fixed query $q$, the formula $\Phi_\calI$ remains fixed regardless of the database $\db$ and the FD set $\Sigma$.
This results in the following fixed-parameter tractability result due to Courcelle’s theorem~\cite{courcelle1990graph,DBLP:conf/focs/ElberfeldJT10}.
\begin{corollary}\label{cor:gaifman-dc-fds}
    Let $q$ be a $\bucq^{\neq}$. %
    The problem $\cqa(\FD,q)$ is fixed-parameter tractable when parameterised by 
 $\twG$. %
\end{corollary}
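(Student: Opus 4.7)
The proof is essentially a direct application of Courcelle's theorem to the MSO encoding $(\calA_\calI,\Phi_\calI)$ constructed in the preceding subsection, combined with Theorem~\ref{thm:mso-correct-fds}. The plan is as follows. Given an input $(\db,\Sigma)$ of $\cqa(\FD,q)$, I form the triple $\calI=(\db,\Sigma,q)$ and construct the first-order structure $\calA_\calI$ and the MSO sentence $\Phi_\calI$ exactly as defined above. Theorem~\ref{thm:mso-correct-fds} then reduces the decision problem to determining whether $\calA_\calI\models \Phi_\calI$, and the parameter $\twG(\calI)$ is by definition the treewidth of the Gaifman graph of $\calA_\calI$.

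Next I would verify that the reduction can be carried out efficiently. Since $q$ is fixed, the vocabulary $\tau=\{\depfails\}\cup\{\linked_{i,j}\mid (i,j)\in L\}$ has constant size, and $\Phi_\calI$ is an MSO sentence whose length is bounded by a constant depending only on $q$. The relation $\depfails^\calA$ is computed by enumerating pairs of facts of $\db$ and checking, for each FD in $\Sigma$, whether the pair jointly violates it; this takes time polynomial in $|\db|+|\Sigma|$. For each $(i,j)\in L$, the relation $\linked_{i,j}^\calA$ is computed by enumerating pairs of facts and checking $q$-consistency in constant time per pair. Hence $\calA_\calI$ can be built in time polynomial in $|\db|+|\Sigma|$.

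Finally, I would invoke the algorithmic form of Courcelle's theorem (in the version of Elberfeld, Jakoby, and Tantau~\cite{DBLP:conf/focs/ElberfeldJT10} if desired), which asserts that MSO model checking on structures of bounded treewidth can be performed in time $f(|\Phi_\calI|,\twG(\calI))\cdot|\calA_\calI|^{O(1)}$ for a computable function $f$. Because $|\Phi_\calI|$ depends only on the fixed query $q$, this simplifies to $g(\twG(\calI))\cdot\mathrm{poly}(|\db|+|\Sigma|)$ for a computable $g$, establishing fixed-parameter tractability. No genuine obstacle arises beyond bookkeeping: the correctness of the encoding has already been settled in Theorem~\ref{thm:mso-correct-fds}, the construction of $\calA_\calI$ is polynomial, and the treewidth parameter of the input precisely matches the treewidth of the Gaifman graph on which Courcelle's theorem is invoked.
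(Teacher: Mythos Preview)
Your proposal is correct and follows essentially the same approach as the paper: the paper's proof is just the one-sentence remark preceding the corollary, namely that for fixed $q$ the sentence $\Phi_\calI$ is fixed and Courcelle's theorem applies to $\calA_\calI$ with parameter $\twG$. Your write-up simply fills in the routine details (polynomial-time construction of $\calA_\calI$, constant size of $\Phi_\calI$, invocation of Theorem~\ref{thm:mso-correct-fds} and Courcelle's theorem) that the paper leaves implicit.
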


For combined complexity, our translation yields a family of MSO-formulas that depend on the query $q$.
In particular, the subformula $\varphi_{\text{$T,q$-sat}}$, which encodes the pair of query atoms in $q$, increases in size as $q$ becomes larger.
At first sight this prohibits the application of Courcelle's theorem. 
Nevertheless, following ideas from L\"uck~et~al.~\cite[Theorem~2]{luck2015parameterized}, we utilise a manifold application of Courcelle's theorem.
We next formalise this in the following, thereby making the theorem applicable.
Let $\kappa$ be a parameterisation. 
A function $f\colon \Sigma^*\rightarrow \Sigma^*$ is $\kappa$-bounded, if there is a computable function $g$ such that $|f(x)| \leq g(\kappa(x))$ for all $x\in \Sigma^\star$.
We now establish a lemma giving size bounds for a constructed MSO-formula $\Phi_\calI$ from an instance $\calI$. 
Then we can generalise \cref{cor:gaifman-dc-fds} to combined complexity by taking the number of query atoms as an additional parameter. We write $\atomsq$ for the function that maps a $\bucq^{\neq}$ $q$ to the number of atoms occurring in it.
For the proofs, the reader is referred to \cref{sect:appendixFD}.
\begin{restatable}{lemma}{lemfamily}\label{lem:family}
    Considering $\kappa=\atomsq(q)$ as the parameterisation, there is a $\FPT$-computable $\kappa$-bounded function $f$ such that $f(\calI)=\Phi_\calI$, for an instance $\calI=(\db, \Sigma, q)$ of $\cqa(\FD,\bucq^{\neq})$.
\end{restatable}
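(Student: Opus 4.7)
The plan is to carry out a direct syntactic analysis of the formula $\Phi_\calI$ as constructed just above, separating the parts that are fixed (independent of $\calI$) from the parts that grow with $q$, and then to show that all growing parts are controlled by $\atomsq(q)$. Since the lemma requires both $\kappa$-boundedness of $|\Phi_\calI|$ and FPT-computability of the map $f\colon\calI\mapsto\Phi_\calI$, I would split the argument into a size bound and a running-time bound.

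For the size bound, I inspect each building block in turn. The formulas $\varphi_{\text{sat}}(T)$ and $\varphi_{\tilde T\supsetneq T}(T,\tilde T)$ use only the fixed symbols $T,\tilde T,\depfails$ together with constantly many first-order variables, so they have size $O(1)$; consequently $\varphi_{\text{repair}}(T)$ has size $O(1)$ as well. For a $\bcq^{\neq}$ disjunct $q_\ell$ with $s_\ell$ relational atoms, the formula $\varphi_{T,q_\ell\text{-sat}}$ contains $s_\ell$ existential quantifiers, $s_\ell$ membership conjuncts $T(x_i)$, and at most $s_\ell^2$ conjuncts of the form $\linked_{i,j}(x_i,x_j)$ (one per pair in the corresponding set $L_\ell$). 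Hence its length is bounded by $O(s_\ell^2\log s_\ell)$, with the logarithmic factor accounting for the encoding of variable and predicate names. Writing $q=\bigvee_{\ell\leq m}q_\ell$, one has $m\leq\atomsq(q)$ and $\sum_\ell s_\ell\leq\atomsq(q)$, hence $\sum_\ell s_\ell^2\leq\atomsq(q)^2$. Combined with the constant-size outer shell provided by $\varphi_{\text{repair}}$ and the top-level disjunction, this yields $|\Phi_\calI|\in O(\atomsq(q)^2\log\atomsq(q))$, which is a computable function of $\atomsq(q)$ alone and thereby establishes $\kappa$-boundedness.

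For FPT-computability, the algorithm implementing $f$ scans $q$ to partition it into disjuncts, determines the pair set $L_\ell$ for each disjunct by comparing variable sets and inequality literals of atoms pairwise (in time $O(|q|^2)$), and then emits $\Phi_\calI$ piece by piece. Crucially, neither $\db$ nor $\Sigma$ is consulted during this construction: all database- and constraint-specific information is absorbed into the interpretations $\depfails^{\calA_\calI}$ and $\linked_{i,j}^{\calA_\calI}$ of the structure $\calA_\calI$, rather than into $\Phi_\calI$ itself. Thus $f$ runs in time polynomial in $|\calI|$ and its output depends in size only on $\kappa(\calI)=\atomsq(q)$. The only genuine point of care, which I would flag as the main (minor) obstacle, is verifying this database/constraint independence explicitly by walking through the definitions of all four auxiliary formulas; once that is in place, both the size bound and the time bound fall out immediately from the block-by-block counting above.
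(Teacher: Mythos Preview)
Your proposal is correct and follows essentially the same approach as the paper: a block-by-block syntactic size analysis of $\Phi_\calI$, noting that $\varphi_{\text{sat}}$, $\varphi_{\tilde T\supsetneq T}$, and $\varphi_{\text{repair}}$ are of constant size while the $\varphi_{T,q_\ell\text{-sat}}$ formulas grow with $q$. The only minor differences are that you obtain a slightly sharper bound ($O(\atomsq(q)^2\log\atomsq(q))$ via $\sum_\ell s_\ell^2\leq(\sum_\ell s_\ell)^2$, versus the paper's cubic bound from the cruder estimate $m\cdot\atomsq(q)^2$) and that you spell out the FPT-computability step more explicitly by observing that neither $\db$ nor $\Sigma$ is consulted when building $\Phi_\calI$.
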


\begin{restatable}{theorem}{thmgaifman}\label{thm:gaifman-cc-fds}
    The problem $\cqa(\FD,\bucq^{\neq})$ is fixed-parameter tractable 
    when parameterised by $\atomsq%
    +\twG$. %
\end{restatable}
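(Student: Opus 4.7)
The plan is to combine the MSO encoding from Theorem~\ref{thm:mso-correct-fds} with a uniform (``manifold'') version of Courcelle's theorem in which both the input structure and the MSO sentence are part of the input, with the formula size absorbed into the parameter. Concretely, given an instance $\calI=(\db,\Sigma,q)$ of $\cqa(\FD,\bucq^{\neq})$, I first construct the pair $(\calA_\calI,\Phi_\calI)$ as in Section~5. This construction is polynomial in $|\calI|$ (each relation $\depfails^\calA,\linked_{i,j}^\calA$ is populated by inspecting pairs of facts), so no blow-up is hidden at this stage, and by Theorem~\ref{thm:mso-correct-fds} the reduction is correctness-preserving.

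Next, I collect the two size/width bounds. By construction, the Gaifman graph of $\calA_\calI$ has treewidth exactly $\twG(\calI)$, so a width-optimal tree decomposition of it can be computed (approximately) in $2^{O(\twG)}\cdot\mathrm{poly}(|\calI|)$ time. By Lemma~\ref{lem:family}, the MSO sentence $\Phi_\calI$ is FPT-computable and of size bounded by some computable function $g(\atomsq(q))$, independent of $\db$ and $\Sigma$. I then invoke the parameterised version of Courcelle's theorem used, e.g., by L\"uck et al.~\cite{luck2015parameterized}: MSO model checking on a structure of treewidth $w$ with a formula of length $\ell$ is solvable in time $h(w,\ell)\cdot\mathrm{poly}(|\calA|)$ for a computable function $h$. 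Plugging in $w=\twG(\calI)$ and $\ell=g(\atomsq(q))$ gives a total running time of the form $h'(\atomsq(q)+\twG(\calI))\cdot\mathrm{poly}(|\calI|)$, which is precisely what FPT in the combined parameter $\atomsq+\twG$ requires.

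The main obstacle, and the reason standard Courcelle cannot be applied out of the box, is that the formula $\Phi_\calI$ itself varies with the input: the subformula $\varphi_{T,q_\ell\text{-sat}}$ scales with $\atomsq(q)$. The subtle point to verify is therefore that $\Phi_\calI$ indeed depends only on the parameter (not on $|\db|$ or $|\Sigma|$), which is exactly the content of Lemma~\ref{lem:family}, and that the uniform form of Courcelle's theorem we appeal to genuinely treats the formula as part of the input rather than as a hardwired constant. Once these two conditions are in place, the proof is immediate from the cited MSO meta-theorem; no further combinatorial argument is needed.
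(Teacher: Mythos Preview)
Your proposal is correct and follows essentially the same approach as the paper: use Lemma~\ref{lem:family} to bound $|\Phi_\calI|$ by a function of $\atomsq(q)$, construct $(\calA_\calI,\Phi_\calI)$ in polynomial time, and then invoke the uniform (manifold) form of Courcelle's theorem from L\"uck et al.\ with the combined parameter $\atomsq+\twG$. Your write-up is in fact somewhat more explicit than the paper's terse proof (you spell out the correctness appeal to Theorem~\ref{thm:mso-correct-fds} and the tree-decomposition step), but the underlying argument is identical.
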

\begin{proof}
    Lemma~\ref{lem:family} bounds the size $|\Phi_\calI|$ by a function of the parameter $\atomsq(q)$. 
    Given an instance $\calI$, we compute $\Phi_\calI$ in $\FPT$ time. 
    Then, the model checking problem of the instance $(\calA_\calI,\Phi_\calI)$ can be decided in $\FPT$ time parameterised by $\twG$ via Courcelle's theorem. 
    Considering $\atomsq+\twG$ as the parameterisation $\kappa$, the size of the formula $\Phi_\calI$ is $\kappa$-bounded, whereby the theorem applies.
\end{proof}
Note that the only number of atoms is not automatically a parameter yielding $\FPT$. 
Specifically,  the $\para\co\NP$-hardness  for data complexity stated in \cref{prop:coNP} already holds when the query contains two atoms.

\subsection{Generalisation to Denial Constraints}
    We consider a fixed set of denial constraints for our MSO encoding.
    This is motivated by the fact that the repair checking problem for denial constraints is intractable when considering arbitrary sets of DCs. 
    Precisely, given a set $\Sigma$ of denial constraints as input, then checking whether an instance $\rep$ is a repair of an instance $\db$ w.r.t. $\Sigma$ is $\DP$-complete even if $\rep$ and $\db$ are both fixed~\cite[Lem.~6]{arming2016complexity}.
    The complexity of the repair checking drops to $\Ptime$ for a {fixed set} of denial constraints $\Sigma$; however, $\cqa(\Sigma)$ remains intractable.

For a set of DCs $\Sigma$, denote by $\atomsc(\Sigma)$ the maximum number of relational atoms appearing in any DC of $\Sigma$. Whenever a database $\db$ does not satisfy $\Sigma$, we find a subset $\db'\subseteq \db$ of size at most $\atomsc(\Sigma)$ not satisfying $\Sigma$. Thus we change the reduction in the following way.
Let $\tau'$ be a vocabulary obtained from $\tau$ by changing the arity of $\depfails$ from $2$ to $k$, where $k=\atomsc(\Sigma)$. Given an instance $I=(\db,\Sigma,q)$, let $\calB_\calI$ be the structure over $\tau'$ that is otherwise exactly as $\calA$, except that the interpretation of $\depfails$ is changed to $\{(f_1, \dots, f_k) \mid f_1,\dots ,f_n\in \db, \{f_1, \dots ,f_n\}\not\models \Sigma\}$.

Furthermore, the MSO-formula $\Psi_\calI$ is exactly as $\Phi_\calI$, except that the formula $\varphi_{\text{sat}}$ is updated as follows: 
\begin{enumerate}
    \item[{(1a)}] $\varphi_{\text{sat}}(T) 
    =\forall x_1\dots\forall x_k %
    (\bigwedge_{i\leq k} T(x_i)\rightarrow \neg \depfails(x_1,\dots,x_k))$.
\end{enumerate}

We next prove (see \cref{app:dc}) that our encoding for denial constraints is also correct. 
\begin{restatable}{theorem}{correctDC}\label{thm:mso-correct-dcs}
Let $\Sigma$ be a set of DCs, $q$ a $\bucq^{\neq}$, and $\db$ a database. Then all repairs of $\db$ over $\Sigma$ satisfy $q$ if and only if 
$\calB_\calI\models \Psi_\calI$.
\end{restatable}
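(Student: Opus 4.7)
The plan is to follow the template established by the proof of \Cref{thm:mso-correct-fds} and isolate exactly the steps that change when moving from FDs (binary conflicts) to DCs ($k$-ary conflicts). Only the subformula $\varphi_{\text{sat}}$ has been modified; the rest of $\Psi_\calI$ is identical to $\Phi_\calI$, and the encoding of query satisfaction via the $\linked_{i,j}$ relations and $\varphi_{T,q_\ell\text{-sat}}$ is untouched. Therefore, the proof reduces to re-establishing the \emph{consistency lemma}: for every $T\subseteq\db$,
\[
\calB_\calI \models \varphi_{\text{sat}}(T) \iff T \models \Sigma.
\]

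For the consistency lemma, let $k=\atomsc(\Sigma)$. The direction ``$T\models\Sigma\Rightarrow \calB_\calI\models\varphi_{\text{sat}}(T)$'' is straightforward: if some $k$-tuple $(f_1,\dots,f_k)$ from $T$ were in $\depfails^{\calB_\calI}$, then by definition $\{f_1,\dots,f_k\}\not\models\Sigma$, contradicting the anti-monotonicity of $\Sigma$ applied to the subset $\{f_1,\dots,f_k\}\subseteq T$. Conversely, suppose $T\not\models\Sigma$. Then some DC $\forall\vec x\,\neg(\theta_1\land\dots\land\theta_n)\in\Sigma$ with $n\le k$ is falsified, witnessed by an assignment whose relational atoms yield facts $g_1,\dots,g_n\in T$ with $\{g_1,\dots,g_n\}\not\models\Sigma$. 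Padding this collection by repeating elements until it has length $k$ produces a tuple $(f_1,\dots,f_k)\in T^k$ with underlying set $\{g_1,\dots,g_n\}$, hence $(f_1,\dots,f_k)\in\depfails^{\calB_\calI}$, so $\varphi_{\text{sat}}(T)$ fails in $\calB_\calI$. I expect this padding-by-repetition step, and verifying that it is permitted by the definition of $\depfails^{\calB_\calI}$ (which treats the arguments as a set), to be the one delicate point requiring explicit justification.

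Having established the consistency lemma, the remainder of the argument transplants from \Cref{thm:mso-correct-fds} verbatim. Specifically, combining the consistency lemma with the definition of $\varphi_{\tilde T\supsetneq T}$ and anti-monotonicity of $\Sigma$ yields that $\calB_\calI\models\varphi_{\text{repair}}(T)$ iff $T$ is a subset-maximal consistent subset of $\db$, i.e., a subset-repair of $\db$ over $\Sigma$. Since $\Sigma$ is a set of DCs (hence anti-monotonic), every repair is a subset-repair, so the sets of interest coincide with those characterised by $\varphi_{\text{repair}}$. The analysis of $\varphi_{T,q_\ell\text{-sat}}$, which relies only on the relations $\linked_{i,j}^{\calB_\calI}=\linked_{i,j}^{\calA_\calI}$ and on $T$, is identical to the FD case and gives that $\calB_\calI,T\models\bigvee_{\ell\le m}\varphi_{T,q_\ell\text{-sat}}$ iff $T\models q$.

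Finally, putting these pieces together, $\calB_\calI\models\Psi_\calI$ iff for every $T\subseteq\db$ that is a repair of $\db$ over $\Sigma$ we have $T\models q$, which is exactly the statement of the theorem. The principal obstacle is purely at the level of the consistency lemma: one must be careful that the $k$-ary encoding correctly captures violations of DCs of \emph{any} arity $\le k$, which is what the repetition-padding argument secures.
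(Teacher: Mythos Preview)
Your proposal is correct and follows essentially the same approach as the paper's proof, which also reduces to the FD case and argues only the altered consistency part (that $\varphi_{\text{sat}}(T)$ holds iff $T\models\Sigma$). Your explicit factoring of this into a ``consistency lemma'' and your padding-by-repetition argument for violations of arity ${<}k$ are exactly the points the paper treats, so nothing further is needed.
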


Similar to the case of {\FD}s, the MSO-formula $\Psi_\calI$ in our translation remains fixed for the data complexity since the set $\Sigma$ is also fixed.
We first establish the size bounds for the MSO-formula $\Psi_\calI$ in our translation before stating the $\FPT$ results for both $\cqa(\Sigma,q)$ and $\cqa(\Sigma,\bucq^{\neq})$. The proofs are located in the appendix.

\begin{restatable}{lemma}{lemfamilydcs}\label{lem:family2}
    Fix a set of DCs $\Sigma$.
    Considering $\kappa=\atomsq(q)$ as the parameterisation, there is a $\FPT$-computable $\kappa$-bounded function $f$ such that $f(\calI)=\Psi_\calI$, for an instance $\calI=(\db, \Sigma, q)$ of $\cqa(\Sigma,\bucq^{\neq})$. 
\end{restatable}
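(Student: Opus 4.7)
The plan is to mirror the proof of Lemma~\ref{lem:family} for the FD case, tracking how each subformula of $\Psi_\calI$ depends on $\Sigma$ and $q$ and exploiting that $\Sigma$ is fixed for the problem $\cqa(\Sigma,\bucq^{\neq})$. The only structural difference between $\Psi_\calI$ and $\Phi_\calI$ is the updated $\varphi_{\text{sat}}(T)$, which now quantifies over $k$-tuples and uses the $k$-ary predicate $\depfails$ with $k=\atomsc(\Sigma)$. Since $\Sigma$ is fixed, $k$ is a constant, and so $\varphi_{\text{sat}}$ itself has constant size. The auxiliary subformulas $\varphi_{\tilde T\supsetneq T}$ and $\varphi_{\text{repair}}$ are likewise of constant size, independent of the input.

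Next I would bound the size of $\varphi_{\text{$T,q_{\ell}$-sat}}$ for each disjunct $q_\ell$ of $q$. This subformula introduces one existential quantifier per relational atom of $q_\ell$ and one conjunct of the form $\linked_{i,j}(x_i,x_j)$ per linked atom pair; the set of linked pairs has size at most $\atomsq(q_\ell)^2$. Summing over the disjuncts (and recalling that $\sum_\ell \atomsq(q_\ell) \leq \atomsq(q)$) and adding the constant contribution from $\varphi_{\text{repair}}$ gives $|\Psi_\calI| = O(\atomsq(q)^2)$, which is a polynomial---and hence $\kappa$-bounded---function of $\kappa = \atomsq(q)$.

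For the runtime, the computation of $f(\calI)=\Psi_\calI$ reduces to a syntactic assembly of these subformulas, where the only nontrivial step is identifying the $q_\ell$-linked pairs of atoms. This is achieved by iterating over all pairs of atoms in each disjunct and checking the three conditions in the definition of $q$-linkedness, which runs in time $O(\atomsq(q)^2)$ in total, well within the $\FPT$ budget in $\atomsq(q)+|\calI|$. I do not anticipate any genuine obstacle: the argument is essentially bookkeeping, and the only conceptual point is that fixing $\Sigma$ is precisely what keeps the arity $k$ of $\depfails$ constant, and thereby keeps the only $\Sigma$-dependent component, $\varphi_{\text{sat}}$, bounded in size.
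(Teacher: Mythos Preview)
Your proposal is correct and follows essentially the same approach as the paper: both reduce to the FD case (Lemma~\ref{lem:family}) by noting that the only new ingredient is the updated $\varphi_{\text{sat}}$, whose size is constant because $\Sigma$ is fixed. Your $O(\atomsq(q)^2)$ bound is in fact slightly tighter than the paper's cubic bound (which simply multiplies the per-disjunct quadratic bound by $m\leq \atomsq(q)$ rather than using $\sum_\ell \atomsq(q_\ell)\leq \atomsq(q)$), but this is an inconsequential refinement.
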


\begin{restatable}{theorem}{gaifmandc}\label{thm:gaifman-dcs}
Let $\Sigma$ be a set of DCs and $q$ a $\bucq^{\neq}$.
    The problem $\cqa(\Sigma, q)$ is fixed-parameter tractable when parameterised by $\twG$. 
    Additionally, the problem $\cqa(\Sigma, \bucq^{\neq})$ is fixed-parameter tractable when parameterised by $\atomsq+\twG$.
    \end{restatable}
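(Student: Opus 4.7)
The plan is to mirror the argument used for Theorem~\ref{thm:gaifman-cc-fds}, but with the MSO encoding $(\calB_\calI,\Psi_\calI)$ in place of $(\calA_\calI,\Phi_\calI)$. For the first assertion, both $\Sigma$ and $q$ are considered fixed, so $\Psi_\calI$ is a fixed MSO-sentence of constant size. Given an input database $\db$, I first build $\calB_\calI$: its domain is $\db$, the relation $\linked_{i,j}$ can be populated in time polynomial in $|\db|$ by checking each pair of facts against the (fixed) pair of $q$-linked atoms, and the $k$-ary relation $\depfails$ (with $k=\atomsc(\Sigma)$ a constant, because $\Sigma$ is fixed) is populated by iterating over all $k$-tuples of facts and checking, in constant time per tuple, whether they falsify $\Sigma$. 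Hence $\calB_\calI$ is computable in polynomial time in $|\db|$. By Theorem~\ref{thm:mso-correct-dcs}, the instance $\calI$ is a yes-instance of $\cqa(\Sigma,q)$ iff $\calB_\calI\models \Psi_\calI$. An application of Courcelle's theorem parameterised by $\twG$ then decides the latter in $\FPT$ time.

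For the second assertion, $\Sigma$ is still fixed but $q$ is part of the input, so I take the parameter $\kappa \coloneqq \atomsq(q)+\twG(\calI)$. The construction of $\calB_\calI$ is still polynomial in $|\calI|$: fixing $\Sigma$ keeps the arity of $\depfails$ constant, while the number of link relations $\linked_{i,j}$ is at most $\atomsq(q)^2$, so constructing the full structure costs $\mathrm{poly}(|\db|)\cdot \atomsq(q)^{O(1)}$, which is $\FPT$. By Lemma~\ref{lem:family2}, the formula $\Psi_\calI$ can be produced in $\FPT$ time and its size is $\kappa$-bounded, i.e.\ bounded by a computable function of $\atomsq(q)$ alone. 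Invoking Theorem~\ref{thm:mso-correct-dcs} again reduces $\cqa(\Sigma,\bucq^{\neq})$ to model checking of $\Psi_\calI$ in $\calB_\calI$. Since $|\Psi_\calI|$ and $\twG(\calI)$ are both bounded by computable functions of $\kappa$, Courcelle's theorem applied in the ``manifold'' form used for Theorem~\ref{thm:gaifman-cc-fds} (cf.\ L\"uck~et~al.) yields an $\FPT$ algorithm.

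The only genuinely delicate point is the second part, where $\Psi_\calI$ is not a fixed sentence. The argument therefore relies on two ingredients that are already in place: (i) the $\kappa$-boundedness of $\Psi_\calI$ established in Lemma~\ref{lem:family2}, which keeps the MSO input to Courcelle's theorem small as a function of the parameter, and (ii) the fact that $\Sigma$ being fixed keeps the arity of $\depfails$ bounded so that $\calB_\calI$ stays over a vocabulary of constant arity and treewidth-based model checking applies uniformly. With these two facts, the proof reduces to a straightforward composition of $\FPT$ steps, as in the proof of Theorem~\ref{thm:gaifman-cc-fds}, with no additional combinatorial work required.
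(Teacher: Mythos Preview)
Your proposal is correct and follows essentially the same approach as the paper: reduce to MSO model checking via Theorem~\ref{thm:mso-correct-dcs}, invoke Lemma~\ref{lem:family2} to bound $|\Psi_\calI|$ by a function of $\atomsq(q)$, and apply Courcelle's theorem (in its manifold form for the second assertion). If anything, your write-up is more careful than the paper's own proof, which is terse and contains copy-paste slips (it refers to $\Phi_\calI$ and $\calA_\calI$ in places where $\Psi_\calI$ and $\calB_\calI$ are meant); your explicit verification that $\calB_\calI$ is constructible in polynomial time because $\atomsc(\Sigma)$ is constant is a point the paper leaves implicit.
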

\begin{proof}
    Lemma~\ref{lem:family} bounds the size $|\Phi_\calI|$ by a function of the parameter $\atomsq(q)$. 
    Given an instance $\calI$, we compute $\Phi_\calI$ in $\FPT$ time. 
    Then, the model checking problem of the instance $(\calA_\calI,\Phi_\calI)$ can be decided in $\FPT$ time in $\twG$ by Courcelle's theorem. 
    Considering $\atomsq+\twG$ as parameterisation, the size of the formula $\Phi_\calI$ is $\kappa$-bounded, whereby the theorem applies.
\end{proof}

\paragraph*{Non-fixed DCs.}
A natural follow-up question is
whether the $\FPT$ results for $\cqa(\Sigma,q)$ and $\cqa(\Sigma,\bucq^{\neq})$ can be extended to combined complexity by considering parameters on $\Sigma$.
It appears, however, that taking $\atomsc$ as the parameter might not help reach tractability, as the exponential blow-up in constructing the structure $\calB_\calI$ cannot be limited to the parameter alone. 
Namely, the time complexity of constructing $\depfails^{\calA_\calI}$
is $O(\size{\db}^{\atomsc(\Sigma)})$.
However, bounding the maximum arity of any DC in $\Sigma$ by a constant $k$ would yield an \FPT result.
This is true since the structure $\calB_\calI$ can then be constructed in $\FPT$ time, and using the same approach as before exploiting Courcelle's theorem.

\section{Conclusion}
The present work continues the line of research dedicated to tame the intractable side of CQA. 
Different strategies for approaching the same issue include the use of SAT solvers \cite{DixitKolaitis,DixitK21,DixitK22}, binary integer programming \cite{KolaitisPT13}, logic programming (e.g., \cite{ArenasBC03,GrecoGZ01,NieuwenborghV06}), and the development of randomised approximation schemes \cite{CalauttiLP18,CalauttiLPS22a,CalauttiCP21,CalauttiLPS22,CalauttiLPS24a}.
Our approach focuses on leveraging graph parameters to achieve fixed-parameter tractability. 
Specifically, we established tractability results based on the treewidth of two novel graph representations: the solution-conflict hypergraph and the Gaifman graph derived from the MSO description of the problem. 
A notable strength of this parameterised approach lies in its robustness. 
Interestingly, our findings reveal no principled distinction between conjunctive queries and their unions, or between keys, functional dependencies, and denial constraints. 

While the results obtained highlight the theoretical robustness of our parameterised framework, the practical relevance remains a key open question. 
Courcelle's theorem, applied to the second graph representation, guarantees theoretical tractability; however, its direct applicability in practice is often limited. 
To address this gap, we designed a direct algorithm for the counting version of CQA, applicable to the first graph representation. 
Implementing and comparing this algorithm to others is next on our agenda. 
Additionally, comparing the two treewidth measures ($\twH$ and $\twG$) on real-world datasets could help identify which measure scales better in practice.
To further extend the parameterised complexity analysis, one possibility is to explore other parameterisations or different kinds of problems, e.g., enumeration complexity. 
Another possibility is to extend investigations to other constraints (e.g., foreign keys/inclusion dependencies/tgds) and queries (e.g., those with aggregates or less restricted use of negation). 
Yet another interesting direction for future work is to utilise recently developed (so-called) decomposition-guided~\cite{eiter2021treewidth, HECHER2022103651, fichte2021decomposition} reductions to obtain theoretical upper and lower runtime bounds.
These reductions would allow to establish tight runtime bounds by encoding $\cqa$ and $\#\cqa$ in quantified Boolean logic.
Exploring these directions is left for future work.

\bibliographystyle{ACM-Reference-Format}
\bibliography{main}

\appendix
\section{Solution-Conflict Hypergraphs}\label{app:solconfhyp}
In this section, we sometimes write $AB$ for the union $A \cup B$ of two sets $A$ and $B$, and consequently shorten $A\{b\}$ as $Ab$.

 \begin{lemma}\label{lem:sub}
     Let $\rep \subseteq \db$ satisfy a set  of antimonotonic constraints $\Sigma$, and let $\db' \subseteq \db$. Then $\maxrep{\rep \cap \db'}{\db'} \subseteq \maxrep{\rep}{\db} \cap \db'$.
 \end{lemma}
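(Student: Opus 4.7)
The plan is to chase definitions using Lemma~\ref{lem:desc} and then do a two-line case analysis that invokes anti-monotonicity of $\Sigma$. Before applying Lemma~\ref{lem:desc} on the left-hand side, I first need to check its hypothesis: since $\rep \models \Sigma$ and $\rep \cap \db' \subseteq \rep$, anti-monotonicity of $\Sigma$ gives $\rep \cap \db' \models \Sigma$, so the lemma is applicable both to $\rep \cap \db'$ in $\db'$ and to $\rep$ in $\db$.

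I would then pick an arbitrary $A \in \maxrep{\rep \cap \db'}{\db'}$ and show two things: (i) $A \in \db'$ and (ii) $A \in \maxrep{\rep}{\db}$. Part (i) is immediate from Lemma~\ref{lem:desc}, which describes $\maxrep{\rep \cap \db'}{\db'}$ as a subset of $\db'$. For part (ii), Lemma~\ref{lem:desc} splits into two cases: either $A \in \rep \cap \db'$, whence $A \in \rep \subseteq \maxrep{\rep}{\db}$ trivially, or $A \in \db'$ with $(\rep \cap \db') \cup \{A\} \not\models \Sigma$. In the latter case, since $(\rep \cap \db') \cup \{A\} \subseteq \rep \cup \{A\}$, anti-monotonicity of $\Sigma$ transports the violation upward to yield $\rep \cup \{A\} \not\models \Sigma$. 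Applying Lemma~\ref{lem:desc} in the reverse direction then places $A$ into $\maxrep{\rep}{\db}$.

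There is no real obstacle here; the only things to be careful about are that (a) the hypothesis of Lemma~\ref{lem:desc} is indeed met for $\rep \cap \db'$ (handled by anti-monotonicity as above), and (b) the direction of the anti-monotonic implication is the correct one, namely that a violation on a subset propagates to any superset. Both are routine given the setup, so the proof should be a few lines long.
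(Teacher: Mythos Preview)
Your proposal is correct and follows essentially the same route as the paper's proof: apply Lemma~\ref{lem:desc} to unpack membership in $\maxrep{\rep \cap \db'}{\db'}$, use anti-monotonicity to lift the violation $(\rep \cap \db') \cup \{A\} \not\models \Sigma$ to $\rep \cup \{A\} \not\models \Sigma$, and conclude via Lemma~\ref{lem:desc} again. If anything, your write-up is slightly more careful than the paper's, which omits the explicit check that $\rep \cap \db'$ is consistent and silently skips the trivial case $A \in \rep \cap \db'$.
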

 \begin{proof}
     Suppose $A \in \maxrep{\rep \cap \db'}{\db'}$. By \cref{lem:desc}, $(\rep \cap \db')\cup \{A\} \not \models \Sigma$, and hence, in particular, $\rep\cup \{A\} \not \models \Sigma$. Again, by \cref{lem:desc}, $A \in \maxrep{\rep}{\db}$. The claim of the lemma then follows, as $A \in \db'$ by the hypothesis.
 \end{proof}

 \begin{lemma}\label{lem:equ}
 Let $\db$ be a database, and let $\Sigma$ be a set of antimonotonic constraints. Let $T=(B,E)$ be a tree decomposition of the conflict graph of $(\db,\Sigma,q)$. Let $\{B_1,B_2\}$ be an arbitrary partition of $B$ into two sets of bags, and let $\db_1 = \bigcup B_1$ and $\db_2 = \bigcup B_2$. Then,
     \[
\maxrep{\rep}{\db} %
= \maxrep{\rep\cap \db_1}{\db_1} \cup \maxrep{\rep \cap \db_2}{\db_2}. %
     \]
     \end{lemma}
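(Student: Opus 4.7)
\medskip

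\noindent\textbf{Proof plan.} The plan is to establish the two inclusions separately. The ``$\supseteq$'' direction will be the easy one, obtained by applying \Cref{lem:sub} to the two parts of the partition: since $\db_1,\db_2 \subseteq \db$ and $\rep \models \Sigma$, we immediately get $\maxrep{\rep\cap \db_i}{\db_i} \subseteq \maxrep{\rep}{\db}\cap \db_i \subseteq \maxrep{\rep}{\db}$ for both $i\in\{1,2\}$, and taking the union completes this direction.

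For the ``$\subseteq$'' direction, I would fix an arbitrary $A \in \maxrep{\rep}{\db}$ and split into two cases based on the alternative provided by \Cref{lem:desc}. If $A \in \rep$, then since $\bigcup B = \db$ (property~1 of a tree decomposition) we have $A \in \db_1$ or $A \in \db_2$, so $A$ belongs to $\rep \cap \db_i$ for some $i$, and because every consistent subset is trivially contained in its own max-repair, $A$ lies in $\maxrep{\rep \cap \db_i}{\db_i}$.

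The interesting case is $A \notin \rep$, for which \Cref{lem:desc} gives $\rep \cup \{A\} \not\models \Sigma$. Here the key step is to pass to a subset-minimal witness: choose a minimal $e \subseteq \rep \cup \{A\}$ with $e \not\models \Sigma$; by minimality, $e$ is a conflict-hyperedge of $\hyperGraphRepr{\db,\Sigma}$, and hence also a hyperedge of the (solution-)conflict hypergraph on which $T$ is a tree decomposition. Since $A$ lies in every inconsistency witness for $\rep \cup \{A\}$ (because $\rep$ itself satisfies $\Sigma$), we have $A \in e$. Applying property~\ref{it:subsumed} of \Cref{def-tw}, there is a bag $b \in B$ with $e \subseteq b$. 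Because $\{B_1,B_2\}$ partitions $B$, this bag lies in, say, $B_1$, so $e \subseteq \db_1$; in particular $A \in \db_1$ and $e \subseteq (\rep \cap \db_1) \cup \{A\}$, which by antimonotonicity of $\Sigma$ gives $(\rep \cap \db_1) \cup \{A\} \not\models \Sigma$. A second invocation of \Cref{lem:desc} then yields $A \in \maxrep{\rep \cap \db_1}{\db_1}$, completing the inclusion.

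The main obstacle is this last case: one must argue that a \emph{local} (bag-wise) max-repair already detects the inconsistency witnessed globally by $\rep \cup \{A\}$. This is exactly what the subsumed-hyperedge property of the tree decomposition supplies; without it, a minimal conflict witness could spread across both $B_1$ and $B_2$, and the inclusion would fail. Everything else reduces to routine bookkeeping with \Cref{lem:desc} and the antimonotonicity of $\Sigma$.
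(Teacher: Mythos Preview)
Your proposal is correct and follows essentially the same route as the paper: both directions are handled identically, with ``$\supseteq$'' by \Cref{lem:sub} and ``$\subseteq$'' by extracting a minimal conflict hyperedge containing $A$, locating it inside a single bag via property~\ref{it:subsumed}, and then invoking \Cref{lem:desc}. Your write-up is in fact slightly more careful than the paper's (you make explicit why $A\in e$ and spell out the $A\in\rep$ case), but the argument is the same.
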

 \begin{proof}
    Observe first that ``$\supseteq$'' is due to \cref{lem:sub}. 
    
    For ``$\subseteq$'', let $A \in \maxrep{\rep}{\db}$. Then $\rep \cup \{A\}\not\models \Sigma$. We may assume $A \notin \rep$; otherwise the set inclusion is immediate. Now, we find a subset $\rep'\subseteq \rep$ such that $\rep' \cup \{A\}$ forms a hyperedge in the conflict graph of $(\db, \Sigma,q)$. In particular, $\rep'\cup \{A\}\not\models \Sigma$.
    Moreover, $\rep' \cup \{A\}$ is included in some bag of $B$. By symmetry, assume this bag belongs to $B_1$, in which case $\rep' \cup \{A\} \subseteq \db_1$.
Then $(\rep \cap\db_1) \cup \{A\} \not\models \Sigma$ since $\rep' \cup \{A\}\subseteq (\rep \cap\db_1) \cup \{A\}$. \cref{lem:desc} then implies $A \in \maxrep{\rep\cap \db_1}{\db_1}$, concluding the proof for ``$\subseteq$''. 
 \end{proof}

\biglemma*
\begin{proof}
    The proof is by a simultaneous induction on the structure of the tree decomposition.

    \paragraph{Base steps} Consider \cref{it:one}, and suppose $b$ is a leaf bag. Since $b$ has no children, $C=\emptyset$ and the induction statement breaks down to $\f(r,s,b,\emptyset)$ being either $1$ or $0$, depending on whether $r$ is a max-repair of $s$ in $b$ satisfying $q$;
    this, indeed, is how $\f(r,s,b,\emptyset)$ is defined. %
     \cref{it:three} does not have a base step since in this case $b$ is not a leaf bag.
    
    \paragraph{Induction step (\cref{it:three})} Let $b\in B$, $c \in \child{b}$, and $r \subseteq s \subseteq b \cap c$.
    The induction hypothesis states that for any $r',s'$ such that $r' \subseteq s' \subseteq c$, 
    $\f(r',s',c)=
        \nmax{r'}{s'}{c}{\sub{c}}$.
We obtain
    \begin{align*}
  \g(r,s,b,c) &= \sum_{
r' \subseteq c \setminus b
}
\f(rr',s(c\setminus b),c)\\
&=\sum_{
r' \subseteq c \setminus b
}
\nmax{r r'}{s  (c\setminus b)}{c}{\sub{c}}\\
&=
\nmax{r}{s}{b \cap c}{\sub{c}}.
    \end{align*}

\paragraph{Induction step (\cref{it:one})} Let $b\in B$, $C \subseteq \child{b}$ and $r \subseteq s \subseteq b$.
The induction hypothesis is that for any $c\in \child{b}$ and any $r,s$ such that $r \subseteq s \subseteq b \cap c$,
\[
B(r,s,b,c)=\nmax{r}{s}{b\cap c}{\sub{c}}.
\]
We prove the induction statement by induction on the size of $C$. The base step $C=\emptyset$ is analogous to the base step of whole the structural induction, considered above.

Consider the induction step $C = Dc$ for some set of bags $D$ and an individual bag $c$.
The induction hypothesis is that for any $r,s$ such that $r \subseteq s \subseteq b$, 
\[
\f(r,s,b,D)=\nmax{r}{s}{b}{\subtwo{b}{D}}.
\]
We obtain
\begin{align*}
\f(r,s,b,C) &= 
    \sum_{
        s' \cup  s'' = s \cap  c
    }\f(r,(s \setminus c)s',b, D) \cdot \g(r \cap c, s'',b,c)\\
    &=
    \sum_{
    s' \cup  s'' = s \cap  c
    }
    \nmax{r}{(s \setminus c)   s'}{b}{  \subtwo{b}{D}}\cdot \nmax{r\cap c}{s''}{b\cap c}{\sub{c}}\\
   &= |\bigcup_{s' \cup  s'' = s \cap  c} \smax{r}{(s \setminus c)   s'}{b}{  \subtwo{b}{D}} \times \smax{r\cap c}{s''}{b\cap c}{\sub{c}}|\\
    &=\nmax{r}{s}{b}{  \subtwo{b}{C}}
    \end{align*}
    where the second last equality follows by the fact that $\smax{r}{s}{d}{e}$ and $\smax{r}{s'}{d}{e}$ are disjoint for distinct $s$ and $s'$, and the
    last equality is due to the following claim. 
    \begin{claim}
       The mapping $\tau: r' \mapsto (r'_0,r'_1)$, where
       \begin{itemize}
           \item $r'_0\coloneqq r' \cap  \usubtwo{b}{D}$ and
           \item $r'_1 \coloneqq r' \cap \usub{c}$,
       \end{itemize}
       is a bijection from $\smax{r}{s}{b}{   \subtwo{b}{C}}$ into
       \begin{equation}\label{eq:set}
\bigcup_{
       \substack{s' \cup  s'' = s \cap  c\\ r \cap c\subseteq s' \cap s''}
} \smax{r}{(s \setminus c)   s'}{b}{  \subtwo{b}{D}} \times \smax{r\cap c}{s''}{b\cap c}{\sub{c}}.
       \end{equation}
    \end{claim}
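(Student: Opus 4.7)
The plan is to establish the three standard bijection requirements for $\tau$---codomain well-definedness, injectivity, and surjectivity---by exploiting the running intersection property of the tree decomposition together with \cref{lem:equ}. The driving structural fact is that $\usubtwo{b}{C} = \usubtwo{b}{D} \cup \usub{c}$ with $\usubtwo{b}{D} \cap \usub{c} \subseteq b \cap c$, so $\usubtwo{b}{D} \setminus b$ and $\usub{c} \setminus b$ partition $\usubtwo{b}{C} \setminus b$. Any $r' \subseteq \usubtwo{b}{C} \setminus b$ therefore splits uniquely as $r' = r'_0 \cup r'_1$ with $r'_0 \subseteq \usubtwo{b}{D} \setminus b$, $r'_1 \subseteq \usub{c} \setminus (b \cap c)$, and $r'_0 \cap \usub{c} = r'_1 \cap \usubtwo{b}{D} = \emptyset$. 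This immediately yields injectivity, and for any $(r'_0, r'_1)$ in the target it confirms that the candidate preimage $r' \coloneqq r'_0 \cup r'_1$ satisfies $\tau(r') = (r'_0, r'_1)$.

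For well-definedness, given $r' \in \smax{r}{s}{b}{\subtwo{b}{C}}$, I would apply \cref{lem:equ} to the partition $\{\subtwo{b}{D}, \sub{c}\}$ of $\subtwo{b}{C}$. Noting that $r \subseteq b$ and $r \cap \usub{c} = r \cap c$ by running intersection, this yields
\begin{equation*}
\maxrep{r \cup r'}{\usubtwo{b}{C}} = \maxrep{r \cup r'_0}{\usubtwo{b}{D}} \cup \maxrep{(r \cap c) \cup r'_1}{\usub{c}}.
\end{equation*}
Defining $s' \coloneqq \maxrep{r \cup r'_0}{\usubtwo{b}{D}} \cap c$ and $s'' \coloneqq \maxrep{(r \cap c) \cup r'_1}{\usub{c}} \cap b$, intersecting the displayed equation with $b \cap c$---and substituting the known value $s \cup (\usubtwo{b}{C} \setminus b)$ on the left---gives $s' \cup s'' = s \cap c$, while $r \cap c \subseteq s' \cap s''$ follows because $r \cap c$ lies in both input sets and in $c$. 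The solution-edge condition then transfers bag-by-bag: for any $b' \in \subtwo{b}{D}$, running intersection forces $b' \cap \usub{c} \subseteq b \cap c$, so $r'_1 \cap b' = \emptyset$ and hence $(r \cup r') \cap b' = (r \cup r'_0) \cap b'$; the analogous identity holds for $b' \in \sub{c}$.

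The main technical hurdle is verifying that the local max-repair attains the required form $\maxrep{r \cup r'_0}{\usubtwo{b}{D}} = ((s \setminus c) \cup s') \cup (\usubtwo{b}{D} \setminus b)$. The intersection with $b$ is already $(s \setminus c) \cup s'$ by the decomposition above, but we additionally need $\usubtwo{b}{D} \setminus b \subseteq \maxrep{r \cup r'_0}{\usubtwo{b}{D}}$. For any $A \in \usubtwo{b}{D} \setminus b$ we know $A \in \maxrep{r \cup r'}{\usubtwo{b}{C}}$, so by \cref{lem:desc} some conflict-hyperedge $e \subseteq (r \cup r') \cup \{A\}$ is inconsistent, and by \crefdefpart{def-tw}{it:subsumed} this $e$ sits in some bag $b'$. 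Since $A \in b'$ and $A \notin b$, connectedness of the set of bags containing $A$ forces $b' \in \subtwo{b}{D} \setminus \{b\}$, whence $e \cap r'_1 \subseteq b' \cap (\usub{c} \setminus b) \subseteq (b \cap c) \setminus b = \emptyset$, so $e \subseteq (r \cup r'_0) \cup \{A\}$ witnesses $A$ in the smaller max-repair; the $\sub{c}$-side is symmetric. Surjectivity then reverses this reasoning: starting from $(r'_0, r'_1)$ in the target and setting $r' \coloneqq r'_0 \cup r'_1$, \cref{lem:equ} recombines the two local max-repairs into the global one of value $s \cup (\usubtwo{b}{C} \setminus b)$, and the bag-wise solution-edge conditions glue back exactly as above.
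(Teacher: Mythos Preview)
Your proposal is correct and follows essentially the same route as the paper: both hinge on \cref{lem:equ} to split $\maxrep{r\cup r'}{\usubtwo{b}{C}}$ along the partition $\{\subtwo{b}{D},\sub{c}\}$, define $s',s''$ as the $b\cap c$-traces of the two local max-repairs, and then verify the membership conditions bag-wise using running intersection; surjectivity is the same recombination via \cref{lem:equ} in both. The only noteworthy difference is your ``main technical hurdle'': you establish $\usubtwo{b}{D}\setminus b\subseteq\maxrep{r\cup r'_0}{\usubtwo{b}{D}}$ by chasing a conflict-hyperedge into a bag of $\sub{D}$ via connectedness, whereas the paper obtains the same inclusion purely set-theoretically---intersecting the \cref{lem:equ} identity with $\usubtwo{b}{D}$ and subtracting $b\cap c$ kills the $\usub{c}$-side (since $\usub{c}\cap\usubtwo{b}{D}\subseteq b\cap c$) and leaves exactly $(s\setminus c)\cup(\usub{D}\setminus b)$. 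Your hyperedge argument is correct but redundant once \cref{lem:equ} is in hand.
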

    \begin{proof}    
    Clearly, $\tau$ is injective.
    We first show that $\tau(r)$ belongs to the set in \cref{eq:set}.  
    Let $r' \in \smax{r}{s}{b}{   \subtwo{b}{C}}$.     By definition (\cref{eq:eset}) we obtain
     $r'   \subseteq \usubtwo{b}{C}\setminus b$, 
    \begin{enumerate}[label=(\alph*)]
        \item\label{eq:max1} $ \maxrep{rr'}{\usubtwo{b}{C}}=s \cup  (\usub{C}\setminus b)$, and
        \item\label{eq:query} \(
    rr' \cap b_0 \not\models q
     \)
    for all $b_0 \in  \subtwo{b}{C}$.
    \end{enumerate}
    We need to show that
    \begin{enumerate}[label=(\roman*)]
             \item\label{it:second-part} $r'_0\in\smax{r}{(s \setminus c)   s'}{b}{  \subtwo{b}{D}}$, and
        \item\label{it:first-part} $r'_1\in
         \smax{r\cap c}{s''}{b\cap c}{\sub{c}}$
    \end{enumerate}
     for some $s$ and $s'$ such that $s'\cup s'' = s \cap c$ and $r \cap c\subseteq s' \cap s''$. For this, we let
     \begin{align}
         s' \coloneqq &     \maxrep{rr' \cap \usubtwo{b}{D}}{\usubtwo{b}{D}} \cap (b \cap c)\text{, and} \label{eq:s'}\\
         s''\coloneqq &\maxrep{rr' \cap \usub{c}}{\usub{c}} \cap (b \cap c).\label{eq:s''}
     \end{align}
The inclusion $r \cap c\subseteq s' \cap s''$ is clear.
Consider now the subtree consisting of $b$ and the union of subtrees rooted at bags in $C$. We can partition this subtree into $\sub{c}$ and $\subtwo{b}{D}$. Then, \cref{lem:sub} applied to $\db\coloneqq   \usubtwo{b}{C}$ $\db_1 \coloneqq \usub{c}$ and $\db_2 \coloneqq   \usubtwo{b}{D}$ yields
\begin{equation}\label{eq:max2}
 \maxrep{rr'}{   \usubtwo{b}{C}}
    = 
    \maxrep{rr' \cap \usub{c}}{\usub{c}}  %
    \cup
    \maxrep{rr' \cap  \usubtwo{b}{D}}{  \usubtwo{b}{D}}.  %
    \end{equation}
In particular, \cref{eq:max1,eq:max2} entails
\[
s \cap c = s \cap (b\cap c) =  \maxrep{rr'}{\usubtwo{b}{C}} \cap (b\cap c) = s' \cup s''.
\]
What remains to be shown are  \cref{it:first-part,it:second-part}.
     \paragraph{Proof of Item \ref{it:first-part}}
     Clearly $r'_1 \subseteq \usub{c}\setminus (b \cap c)$.
     By definition (\cref{eq:eset}), it then suffices to prove that 
         \begin{enumerate}[label=(\alph*1)]
        \item\label{eq:max1-1} $\maxrep{(r\cap c)r_1'}{\usub{c}}=s'' \cup  (\usub{c}\setminus b)$, and
         \item\label{eq:query-1} \(
    rr_1' \cap b_0 \not\models q
     \)
    for all $b_0 \in  \sub{c}$.
    \end{enumerate}
     Clearly, \cref{eq:query-1} follows by \cref{eq:query}. For \cref{eq:max1-1}, 
    we obtain 
    \begin{align}
    \maxrep{rr' \cap \usub{c}}{\usub{c}} \setminus (b \cap c) &= (\maxrep{rr'}{   \usubtwo{b}{C}}\cap \usub{c})\setminus (b\cap c)\nonumber
    \\
    &=\overbrace{((s \cap \usub{c})\setminus (b\cap c))}^{=\emptyset} \cup  (\usub{c}\setminus b)\nonumber\\
    &= \usub{c} \setminus (b \cap c).\nonumber %
    \end{align}
    The first equality follows by \cref{eq:max2} due to the connectivity condition of the tree decomposition, which entails that the intersection of $\usub{c}$ and $\usubtwo{b}{D}$ can contain only elements from $b \cap c$.
    The second equality stems from \cref{eq:max1}. The third equality is obtained by observing that any element in both $b$ and $\usub{c}$ belongs to $b\cap c$ due to the connectivity requirement.
    By \cref{eq:s''}, we can thus write
      $\maxrep{rr' \cap \usub{c}}{\usub{c}} = s''   (\usub{c} \setminus (b \cap c))$.
      Applying furthermore the connectivity condition, we can rewrite $rr' \cap \usub{c}$ as $(r\cap c)(r'\cap\usub{c})$; in particular, $r\cap c=r\cap \usub{c}$  since $r \subseteq b$, and any element in both $b$ and $\usub{c}$ belongs to $c$. Since $r'_1=r'\cap\usub{c}$,
      we obtain $\maxrep{(r\cap c)r'_1}{\usub{c}} = s''   (\usub{c} \setminus b )$, which is the statement of \cref{eq:max1-1}.

      \paragraph{Proof of Item \ref{it:second-part}}
           We have $r'_0 \subseteq \usubtwo{b}{D}\setminus b$, and
       now it suffices to prove that 
         \begin{enumerate}[label=(\alph*0)]
        \item\label{eq:max1-2} $\maxrep{rr_0'}{\usubtwo{b}{D}}=(s\setminus c)s'   (\usub{D}\setminus b)$, and
         \item\label{eq:query-2} \(
    rr_0' \cap b_0 \not\models q
     \)
    for all $b_0 \in  \subtwo{b}{D}$.
    \end{enumerate}
     Again, \cref{eq:query-2} follows by \cref{eq:query}.
Also, by the connectivity condition, \cref{eq:max1}, and \cref{eq:max2}, we obtain analogously to the previous case that
    \begin{align}
    \maxrep{rr' \cap \usubtwo{b}{D}}{\usubtwo{b}{D}} \setminus (b \cap c) &= (\maxrep{rr'}{   \usubtwo{b}{C}}\cap \usubtwo{b}{D})\setminus (b\cap c)\nonumber
    \\
    &=((s \cap \usubtwo{b}{D})\setminus (b\cap c)) \cup  (\usub{D}\setminus b)\nonumber\\
    &= (s \setminus c) \cup (\usub{D}\setminus b).\nonumber %
    \end{align}
 By \cref{eq:s'}, we can thus write
     $\maxrep{rr' \cap  \usubtwo{b}{D}}{ \usubtwo{b}{D}} = s'   (s \setminus  c )  (  \usub{D} \setminus  b)$. 
    Since $rr'\cap \usubtwo{b}{D}=r(r'\cap \usubtwo{b}{D})$ and $r'_0\coloneqq r' \cap  \usubtwo{b}{D}$,
     we obtain $\maxrep{rr'_0}{ \usubtwo{b}{D}}=(s \setminus  c )s'     (  \usub{D} \setminus  b)$, which is the statement of \cref{eq:max1-2}.

    Having concluded that \cref{it:second-part,it:first-part} hold, we have now proven that $\tau(r)$ belongs to the set in \cref{eq:set}.
 It remains to show that $\tau$ is surjective. Let $r_0\in \smax{r}{(s \setminus c)s'}{b}{\usubtwo{b}{D}} $ and $r_1\in \smax{r\cap c}{s''}{b\cap c}{\usub{c}}$ for some $s',s''$ such that $s' \cup  s'' = s \cap  c$ and $r \cap c\subseteq s' \cap s''$. We need to find $r' \in \smax{r}{s}{b}{\usubtwo{b}{C}} $ such that $r_0=r' \cap \usubtwo{b}{D}$ and $r_1 = r' \cap \usub{c}$.
 Now, we simply let $r'\coloneqq r_0 r_1$. It is easy to see that the  above two equality conditions hold. To show that $r'$ belongs to $ \smax{r}{s}{b}{ \usubtwo{b}{C}}$, we need to prove that
 $r' \subseteq \usubtwo{b}{C}\setminus b$,
\begin{enumerate}[label=(\Alph*)]
    \item\label{it:max-A} $\maxrep{rr'}{\usubtwo{b}{C}}=s(\usubtwo{b}{C}\setminus b)$, and
    \item\label{eq:query-B} \(
    rr' \cap b_0 \not\models q
     \)
    for all $b_0 \in  \subtwo{b}{C}$.\end{enumerate}
 By definition (\cref{eq:eset}),
 $r_0 \subseteq \usubtwo{b}{D} \setminus b$ is such that 
 \begin{enumerate}[label=(\Alph*0)]
\item\label{it:max-A0} $\maxrep{rr_0}{\usubtwo{b}{D}}=(s\setminus c)s'( \usubtwo{b}{D} \setminus b)$, and
    \item\label{eq:query-B0} \(
    rr_0 \cap b_0 \not\models q
     \)
    for all $b_0 \in  \subtwo{b}{D}$.
 \end{enumerate}
 Moreover, $r_1\subseteq  \usub{c} \setminus (b \cap c)$ is such that
  \begin{enumerate}[label=(\Alph*1)]
\item\label{it:max-A1} $\maxrep{(r\cap c)r_1}{\usub{c}}=s''(\usub{c} \setminus (b \cap c)),$ and
  \item\label{eq:query-B1} \(
    rr_1 \cap b_0 \not\models q
     \)
    for all $b_0 \in  \sub{c}$. 
 \end{enumerate}
We observe first that $r'=r_0r_1 \subseteq \usubtwo{b}{C}\setminus b$. Furthermore, \cref{eq:query-B} is a consequence of \cref{eq:query-B0,eq:query-B1}.
For \cref{it:max-A}, consider the statement of \cref{lem:sub}, 
letting $\db\coloneqq \usubtwo{b}{C}$, $\db_1\coloneqq \usubtwo{b}{D}$, $\db_2\coloneqq \usub{c}$, and $\rep \coloneqq r r'$. Then $\rep \cap \db_1 = rr_0$ and $\rep \cap \db_2 = (r\cap c)r_1$,
and using \cref{it:max-A0,it:max-A1} the lemma yields       
\begin{align*}
\maxrep{rr'}{\usubtwo{b}{C}}&=
\maxrep{\rep}{\db} %
= \maxrep{\rep\cap \db_1}{\db_1} \cup \maxrep{\rep \cap \db_2}{\db_2}, %
\\
&=(s\setminus c)s'( \usubtwo{b}{D} \setminus b)
\cup 
s''(\usub{c} \setminus (b \cap c))\\
&=s's''(s\setminus c)(\usubtwo{b}{C}\setminus b )
=s(\usubtwo{b}{C}\setminus b).
\end{align*}
Thus \cref{it:max-A} follows. We conclude that $r' \in \smax{r}{s}{b}{\usubtwo{b}{C}}$, as required.
This shows that $\tau$ is surjective, concluding the proof of the claim.
    \end{proof}
This concludes the induction proof for the lemma.
\end{proof}

\section{Gaifman Graphs and MSO-encoding}\label{sect:appendixB}

\subsection{Functional Dependencies}\label{sect:appendixFD}

\different*
\begin{proof}
    For simplicity, we consider the empty constraint set $\Sigma$.
    This is not a restriction as the construction can easily be adapted to the case of non-empty $\Sigma$.
    
    Let $\db_{n} \coloneqq \{R(1), \dots ,R(n),S(1), \dots S(n)\}$ and $q\coloneqq \exists x y(R( x)\wedge S( y))$, and let us write $\calI_n=(\db_n, \emptyset, q)$.
    Then $\hyperGraphRepr{\calI_n}$  consists of solution-edges $\{R(i),S(j)\}$, for $i,j\in [n]$, 
    meaning that it is the complete bipartite graph $K_{n,n}$ which has treewidth $n$.
    However, the Gaifman graph of ${\calA_{\calI_n}}$ is an edgeless graph as no atoms in $q$ share variables.
    Consequently, the Gaifman graph is trivially a tree, meaning its treewidth is $1$.
    This concludes the proof of the first item.
    
    For the second item, let $\db'_{n} = \{R(1,*), \dots ,R(n,*), S(*,1), \dots ,S(*,n),T(-1), \dots ,T(-n)\}$ and
     $q' = \exists xyz (R(x,y)\land S(y,z)\land T(z))$, and let us write $\calI'_n=(\db'_n, \emptyset, q')$. %
    This time, $\hyperGraphRepr{\calI'_n}$ contains no solution-edges (and vacuously no conflict-edges), resulting in treewidth $1$.
    In contrast, the Gaifman graph of ${\calA_{\calI'_n}}$ contains edges due to the common variables in the atoms $R(x,y)$ and $S(y,z)$. 
    Specifically, this graph is the complete bipartite graph $K_{n,n}$ %
    which has treewidth $n$.
\end{proof}

\correct*

\begin{proof}
Suppose $q$ is of the form $\bigvee_{\ell \leq m} q_\ell$.

    (``$\Longleftarrow$'')
    Suppose there is repair $\rep\subseteq \db $ %
    that does not satisfy $q$. 
    We prove that $\calA_\calI\not\models \Phi_\calI$.
    We find an interpretation $\Theta$ for the variable $T$ corresponding to facts in $\rep$, failing $\Phi_\calI$. %
    That is, we let $\Theta(T)\coloneq \rep$.
    
    (I). $(\calA_\calI,\Theta)\models \varphi_{\text{repair}}(T)$. 
    Since $\rep$ satisfies every functional dependency $f\in \Sigma$ over each $R\in\sch$, we have $(\calA_\calI,\Theta)\models \varphi_{\text{sat}}(T)$.
    To see this, we note that $\{R(\vec a),R(\vec b)\}\models f$ for each $R(\vec a),R(\vec b)\in \rep$ and {\FD} $f\in \Sigma$ over $R$.
    Then, $\varphi_{\text{repair}}(T)$ is satisfied in $\calA_\calI$ by the  interpretation $\Theta$ as $(R(\vec a), R(\vec b))\not\in\depfails^{\calA_\calI}$ for any $R(\vec a), R(\vec b) \in \Theta(T)$ and {\FD} $f$ over $R$. 
    As a result, $(\calA_\calI,\Theta)\models \varphi_{\text{sat}}(T)$. 
    Finally, $(\calA_\calI,\Theta)\models \varphi_{\text{repair}}(T)$ since $\rep$ is a maximal subset of $\db$ satisfying $\Sigma$. 

    (II). Next, we prove that $(\calA_\calI,\Theta) \not\models \varphi_{\text{$T,q_\ell$-sat}}$ for each $\ell \leq m$.
    Since $\rep$ is a repair that does not satisfy $q$, no sequence %
    of facts $\{q_1(\vec a_1),\dots, q_{s_\ell}(\vec a_{s_\ell})\}\subseteq \rep$ forms a solution to $q_\ell$.
    Recall $q_\ell$ is of the form $\exists \vec x (\bigwedge_{i \leq s_\ell} q_i(\vec t_i) \land \bigwedge_{s_\ell< i \leq s'_\ell} t_i \neq u_i)$, where the first conjunction consists of relational atoms
    and the second of inequality atoms.
    Assume toward contradiction that $\calA_\calI\models \varphi_{\text{$T,q_\ell$-sat}}$. Then we find facts $f_1,\dots, f_{s_\ell}$ from $T$ such that
    $\calA_\calI \models \bigwedge_{(i,j)\in L}  \linked_{i,j}(f_i,f_j) $. In other words, two facts $f_i$ and $f_j$, for $1 \leq i,j \leq s_\ell$, are $q_\ell$-consistent whenever $q_i(\vec t_i),q_j(\vec t_j)$ are $q_\ell$-linked.
    We prove by induction on $k\leq s_\ell$ that there exists an assignment $h$ from the variables to the constants such that
    $q_i(h(\vec t_i))=f_i$, for $i \leq k-1$, and $h(t_i)\neq h(u_i)$, for $s< i \leq s'$.
    This entails that    $%
    \rep \models \exists \vec x(\bigwedge_{i \leq \ell} q_i(\vec t_i) \land \bigwedge_{s< i \leq s'} t_i \neq u_i )$. Thus we obtain $\rep \models q$, which leads to a contradiction. Consequently, $\calA_\calI\not \models \varphi_{\text{$T,q_\ell$-sat}}$, and therefore it suffices to prove the induction claim.
    
    The base case $k=1$. Since $(1,1)\in L$ vacuously, we have $\calA_\calI \models \linked_{1,1}(f_1,f_1)$. Thus $f_1$ is $q_\ell$-consistent with itself, and by definition there is an assignment $h$ such that $q_1(h(\vec t_1))=f_1$, and $h(t_i)\neq h(u_i)$, for $s< i \leq s'$.
    
    The base case $k>1$. %
    The induction hypothesis states that there is an assignment $h$
    such that $q_i(h(\vec t_i))=f_i$, for $i \leq k-1$, and $h(t_i)\neq h(u_i)$, for $s< i \leq s'$. Consider then the fact $f_k$, and suppose it is of the form
    $R(a_1, \dots ,a_p)$. Suppose also $\vec t_k$ is of the form $(t_1, \dots ,t_p)$. Consider a term $t_j$, for $j \leq p$. If $t_j$ is a constant, then $t_j=a_j$ due to the fact that $f_k$ is $q_\ell$-consistent with itself. Suppose then $t_j$ is a variable $x$. If $x$ appears in some sequence $\vec t_q$, $q \leq k-1$, then $t_j=h(x)$ due to the fact that $f_k$ and $f_q$ are $q_\ell$-consistent. Otherwise, if $x$ appears in none of the listed sequences, we change $h$ by setting $h(x)=t_j$. Note that if $t_{j'}$ is also the variable $x$, for $j \neq j'$, then $t_j=t_{j'}$ because of the $q_\ell$-consistency of $f_k$ with itself. Thus we observe $q_k(h(\vec t_k))=f_k$. Similarly, one can observe using $q_\ell$-consistency that it remains to be the case that $h(t_i)\neq h(u_i)$, for $s< i \leq s'$.
    This concludes the induction proof. 

    (``$\Longrightarrow$'')
    Suppose $\calA_\calI\not \models \Phi_\calI$. 
    Then, there exists an interpretation $\Theta$ to the set variable $T$ such that $(\calA_\calI,\Theta)\not\models \Phi'$, where $\Phi'$ is a formula in the vocabulary $\tau_\calI\cup \{T\}$ obtained from $\Phi_\calI$ by removing the MSO-quantifier in the prefix.
    We construct a repair $\rep$ of $\db$ that does not satisfy $q$ from the assignment $\Theta$ of the set variable $T$.
    That is, we let $\rep = \Theta(T)$.
    Clearly, $\rep$ is a repair
    due to $(\calA_\calI,\Theta)\models \varphi_{\text{repair}}(T)$. %
    Thus it remains to prove that $\rep\not\models q$. 
    Suppose to the contrary that $\rep\models q$. Then $\rep\models q_\ell$, for some $\ell\leq m$. Suppose $q_\ell$ is of the form $\exists \vec x (\bigwedge_{i \leq s_\ell} q_i(\vec t_i) \land \bigwedge_{s_\ell< i \leq s'_\ell} t_i \neq u_i)$, where the first conjunction consists of relational atoms
    and the second of inequality atoms. Let $h$ be the assignment of the variables in $\vec x$ witnessing $\rep\models q_\ell$. We obtain $q_i(h(\vec t_i))\in \rep$, for $i \leq s_\ell$ and $h(t_i)\neq h(u_i)$, for $s_\ell< i \leq s'_\ell$. Next, let $\alpha$ be the assignment that maps the variables $x_1,\dots,x_s$ respectively to the facts $q_1(h(\vec t_1)), \dots ,q_{s_\ell}(h(\vec t_{s_\ell}))$. Clearly, $(\calA_\calI,\Theta) \models_\alpha \bigwedge_{i\leq s} T(x_i) \land\bigwedge_{(i,j)\in L}  \linked_{i,j}(x_i,x_j) $, meaning that we obtain  $(\calA_\calI,\Theta) \models\varphi_{\text{$T,q_\ell$-sat}}$. This, however, contradicts the fact that $(\calA_\calI,\Theta)\not\models \Phi'$. By the contradiction we obtain $\rep\not\models q$, as required.
\end{proof}

\lemfamily*

\begin{proof}
    Let $\calI=(\db, \Sigma, q)$ be an instance, where $q$ is of the form $\bigvee_{\ell\leq m} q_\ell$.
    The function $f$ returns the formula $\Phi_\calI$ for each instance $\calI$.
    It remains to prove that the function $f$ is $\FPT$-computable and $\kappa$-bounded.
    To achieve this, it suffices to prove that the size of $f(\calI)= \Phi_\calI$ is bounded by the parameter values of the instance $\calI$.
    We have the following observations regarding the size of each subformula.
\begin{itemize}
    \item The subformulas %
    $\varphi_{\text{sat}}$, $\varphi_{\tilde T \supsetneq T}$, and $\varphi_{\text{repair}}$ are of constant size.
    \item The formulas $\varphi_{\text{$T,q_{\ell}$-sat}}$, for $\ell \leq m$, have size at most quadratic in the number of atoms in $q$ (due to $L$).
    \item Finally, the size of $\Phi_\calI$ depends cubically on the number of atoms in $q$ (due to $m \leq \atomsq(q)$).
\end{itemize}
Considering $\atomsq(q)+\twG$ as the parameter, $|\Phi_\calI|$ is bounded by a function in the parameter (in particular, $\atomsq(q)^3$).
Therefore the function $f$ is $\FPT$-computable and $\kappa$-bounded. This proves the lemma.
\end{proof}

\subsection{Denial Constraints}\label{app:dc}

\correctDC*
\begin{proof}
    The proof is analogous to the proof of Theorem~\ref{thm:mso-correct-fds}.
    We sketch the main differences here and establish their correctness.
    
    ``$\Longleftarrow$''. 
    Suppose there exists a repair $\rep\subseteq \db$ that does not satisfy $q$.
    We prove that the interpretation $\Theta$ for the set variable $T$ corresponding to the facts in $\rep$ does not satisfy $\Psi_\calI$ in $\calB_\calI$. Thus we let $\Theta(T)\coloneq \rep$
    
    (I). $(\calB_\calI,\Theta)\models \varphi_{\text{repair}}(T)$. 
    Since $\rep$ satisfies  $ \Sigma$, we have $(\calB_\calI,\Theta)\models \varphi_{\text{sat}}(T)$.
    To see this, we note that %
    $C\not\subseteq \rep$ for any conflict $C$ in $\db$ due to a DC $\sigma\in \Sigma$. %
    Then, Formula~(1a) is satisfied by the given interpretation $\Theta$, as $(f_1,\dots,f_k)\not\in\depfails^\calB$ for $f_1,\dots ,f_n \in \Theta(T)$. %
    As a result, $(\calB_\calI,\Theta)\models \varphi_{\text{sat}}(T)$. 
    Finally, $(\calB_\calI,\Theta)\models \varphi_{\text{repair}}(T)$ since  $\rep$ constitutes a subset-maximal subset of $\db$ satisfying $\Sigma$. 
    Then, analogously to the proof of \cref{thm:mso-correct-fds}, we obtain $( \calB_\calI,\Theta) \not\models\bigvee_{\ell\leq m}  \varphi_{\text{$T,q_{\ell}$-sat}}$. 
    
    ``$\Longrightarrow$''. 
    Suppose $\calB_\calI\not\models \Psi_\calI$. 
    Then, there exists an interpretation $\Theta$ to the set variable $ T$ such that $(\calB_\calI,\Theta)\not\models \Psi'$, where $\Psi'$ is a formula in the vocabulary $\tau_\calI\cup \{T \}$ obtained from $\Psi_\calI$ by removing the MSO-quantifier in the prefix.
    Using the interpretation $\Theta$ of the set variable $T$, we construct a repair $\rep$ for $\db$ failing $q$.
    To this aim, we let $\rep =  \Theta(T)$.
    Now, we prove that $\rep$ 
    satisfies every DC in $\Sigma$.
    Suppose to the contrary, there exists a conflict $C\subseteq \rep$ due to some DC $\sigma\in \Sigma$. 
    Moreover, we can assume $|C|= k$. 
    Since $C\subseteq \rep$, there are $x_1,\dots,x_k$ such that $(\calB_\calI,\Theta)\models \bigwedge_{j\leq k} T(x_j) $ (due to the way $\rep$ is defined) as well as $(\calB_\calI,\Theta)\models \depfails(x_1,\dots,x_k)$.
    However, this contradicts Formula~(1a) which is true for all sequences of length $k$ in $\Theta(T)$. 
    As a result, $\rep \models \sigma$ for each $\sigma\in \Sigma$. Following this, we can analogously to the proof of \cref{thm:mso-correct-fds} show that $\rep \not\models q$. 
\end{proof}

\lemfamilydcs*
\begin{proof}
    Let $\calI=(\db, \Sigma, q)$ be an instance.
    The function $f$ returns the formula $\Psi_\calI$ for each instance $\calI$.
    It remains to prove that the function $f$ is $\FPT$-computable and $\kappa$-bounded.
    To achieve this, we prove that the size of $f(\calI)= \Psi_\calI$ is bounded by the parameter values of the instance $\calI$.
    The only difference between $\Psi_\calI$ and $\Phi_\calI$ is the subformula $\varphi_{\text{sat}}$.
    However, the size of the subformula $\varphi_{\text{sat}}$ in $\Psi_\calI$ depends linearly on the maximum number of atoms $k$ in any denial constraint of $\Sigma$.
    As a result, similar to the case of $\Phi_\calI$, $|\Psi_\calI|$ is also bounded by a function in $\atomsq(q)$, since $k$ is constant.
    Therefore the function $f$ is $\FPT$-computable and $\kappa$-bounded. This proves the lemma.
\end{proof}

\gaifmandc*

\begin{proof}
    Since the set $\Sigma$ of DCs is fixed, its size and maximum arity is a constant.
    For data (resp., combined) complexity, the query $q$ is fixed (part of the input).
    Consequently (using Lemma~\ref{lem:family2}), one can bound the size $|\Psi_\calI|$  by a function of the parameter $\atomsq(q)$.
    Given an instance $\calI$, we compute $\Psi_\calI$ in $\FPT$ time. 
    Then, the model checking problem of the instance $(\calA_\calI,\Psi_\calI)$ can be decided in $\FPT$ time in $\twG$ by Courcelle's theorem. 
    Considering $\atomsq+\twG$ as parameterisation, the size of the formula $\Psi_\calI$ is $\kappa$-bounded, whereby the theorem applies.
\end{proof}

\end{document}